\newtheorem{theorem}{Theorem}[section]
\newtheorem{definition}[theorem]{Definition}
\newtheorem{lemma}[theorem]{Lemma}
\newtheorem{fact}[theorem]{Fact}
\numberwithin{equation}{section} \numberwithin{theorem}{section}
\renewcommand{\epsilon}{\varepsilon}
\definecolor{light}{gray}{.9}
\newcommand{\cA}{\ensuremath{\mathcal A}}
\newcommand{\cC}{\ensuremath{\mathcal C}}
\newcommand{\cE}{\ensuremath{\mathcal E}}
\newcommand{\cF}{\ensuremath{\mathcal F}}
\newcommand{\cG}{\ensuremath{\mathcal G}}
\newcommand{\cL}{\ensuremath{\mathcal L}}
\newcommand{\cS}{\ensuremath{\mathcal S}}
\newcommand{\cT}{\ensuremath{\mathcal T}}
\newcommand{\bbE}{{\ensuremath{\mathbb E}} }
\newcommand{\bbI}{{\ensuremath{\mathbb I}} }
\newcommand{\bbN}{{\ensuremath{\mathbb N}} }
\newcommand{\bbP}{{\ensuremath{\mathbb P}} }
\newcommand{\bbR}{{\ensuremath{\mathbb R}} }
\newcommand{\bbZ}{{\ensuremath{\mathbb Z}} }
\let\a=\alpha \let\b=\beta   \let\d=\delta  \let\e=\varepsilon
 \let\g=\gamma       \let\l=\lambda
      \let\o=\omega      
  \let\s=\sigma    
\let\D=\Delta   \let\G=\Gamma   
\let\O=\Omega      
\author [A.\ Faggionato]{Alessandra Faggionato}
\address{Alessandra Faggionato,
  Dipartimento di Matematica, Universit\`a di Roma ``La Sapienza",
  P.le Aldo Moro 2, 00185 Roma, Italy}
\email{faggiona@mat.uniroma1.it}
\author [D.\ Di Pietro]{Daniele Di Pietro}
\address{Daniele Di Pietro,
  Dipartimento di Fisica, Universit\`a di Roma ``La Sapienza",
  P.le Aldo Moro 2, 00185 Roma, Italy}
\title[Gallavotti--Cohen--Type symmetries for Markov chains]{Gallavotti--Cohen--Type symmetry  related to cycle decompositions for
Markov chains and biochemical applications}
\thanks{Work partially supported by the European Research Council through the ``Advanced
Grant'' PTRELSS 228032}
\begin{document}

\maketitle

\begin{abstract}
We  slightly extend  the fluctuation theorem   obtained in \cite{LS}
for sums of generators,  considering continuous--time Markov chains
on a finite state space whose underlying graph has multiple edges
and no loop. This extended frame is suited when analyzing chemical
 systems. As  simple corollary we derive  by  a different method the fluctuation
 theorem of  D.\ Andrieux and P.\ Gaspard
 for the fluxes along the chords  associated to a fundamental set of
 oriented cycles
\cite{AG2}.

We associate to each random trajectory an oriented cycle on the
graph and we decompose it in terms of a basis of oriented cycles. We
prove  a  fluctuation theorem for the coefficients in this
decomposition.
 The resulting fluctuation theorem involves the cycle
affinities, which in many real systems correspond to the macroscopic
forces. In addition, the above decomposition is useful when
analyzing the large deviations of additive functionals of the Markov
chain.
  As example of application, in a very
general context we derive a fluctuation relation for  the mechanical
and chemical currents of a molecular motor moving along a periodic
filament.

\smallskip

\smallskip

\noindent {\em Keywords}: nonequilibrium steady state, thermodynamic
force, affinity,  large deviation, generating function, oriented
cycle, spanning tree, fluctuation theorem, molecular motor.

\end{abstract}

\section{Introduction}
 Out--of--equilibrium systems are common in daily life. Examples are mechanical systems in contact with thermal
 reservoirs as well reacting systems in contact with particle reservoirs
 generating particle fluxes through differences of chemical
 potentials. Considering chaotic dynamical systems of statistical mechanics, Gallavotti and Cohen \cite{GC} have discovered a symmetry relation (now
 taking their names) for the large deviation functional of the average entropy creation rate. This relation is
 also called {\sl fluctuation theorem}.  The same result has then been derived for Markov stochastic processes by
 Kurchan \cite{K}, Maes \cite{M} and Lebowitz and Spohn \cite{LS}. An extensive rigorous  treatment including  further developments is given
 in the book of Jiang et al. \cite{JQQ} (see also references
 therein together with \cite{BG}).
  Near to equilibrium, the   fluctuation theorem generalized to systems with several currents
  yields the Onsager's  symmetry and the usual Green--Kubo's  formulas for
transport coefficients \cite{G,LS}. In this sense, it can be thought
as their generalization far from equilibrium.

  The initial investigation of the fluctuation theorem
  has referred to models coming from statistical mechanics, and in the last years a proper analysis
 has been developed for chemical and biochemical systems by  Andrieux and Gaspard
 (see for example \cite{Ga,AG0,AG1,AG2,AG3,AG4} and references therein). The analysis for a model of molecular motor
  along a periodic filament with two chemical states has also been done  by  Lacoste, Lau, Mallick \cite{LM,LLM0,LLM1,LM1}.

The approach of Andrieux and Gaspard is much inspired by  the
network theory of out--of--equilibrium systems. This theory has been
 developed  by Hill \cite{H} and Schnakenberg \cite{SCH} with a special attention to biochemical systems. Mathematically, one is interested to  the stochastic
 dynamics of a continuous time random walk on a finite  connected graph $\cG$ with multiple edges and no loop, where the jumps
 along an edge
  have positive probability rate  in both directions. For such a random walk the stationary distribution $\mu$ exists and is unique. To each
   oriented bond $\ell$ in $\cG$ from
  the state $\s$ to the state $\s'$, one associates a microscopic affinity defined as
  \begin{equation}
 \cA(\ell)= \ln \frac{ \mu(\s) k(\s, \s')}{\mu(\s') k(\s',
\s)}\,,
  \end{equation} where $k(\s_1,\s_2)$ denotes the probability rate for a jump from
$\s_1$ to $\s_2$. Note that  detailed balance in reversible systems
simply corresponds to the fact that all microscopic affinities are
zero. The affinity associated to a given oriented cycle $\cC$  in
$\cG$ is then defined as the sum  of the microscopic affinities of
the oriented bonds forming the cycle: $\cA(\cC)=\sum _{\ell \in
\cC}\cA(\ell)$. It is simple to check that the affinity of a cycle
remains the same if one replaces the above microscopic affinity
$\cA(\ell)$ by $\ln \bigl(k(\s, \s')/ k(\s', \s)\bigr)$.

The network theory presented in \cite{SCH} is based on the fact that
to each unoriented spanning (maximal) tree on the graph $\cG$ one
can associate in a canonical way a family of oriented cycles, called
{\sl fundamental set of oriented cycles}, being a sort of basis of
the space of all oriented cycles. Each oriented cycle in a
fundamental set contains only one oriented edge (called {\sl chord})
that does not belong to the spanning tree when disregarding
orientation. In \cite{SCH} Schnakenberg has observed that, for
several models of  real systems, the macroscopic (mechanical or
thermodynamical) forces keeping the system out--of--equilibrium
 are the affinities  of the
oriented cycles of some fundamental set  in $\cG$ (note that more
oriented cycles can have the same affinity). In \cite{AG2} this
situation is called {\sl Schnakenberg condition}. We point out (see
\cite{SCH,AG2}) that usually thermodynamic forces are not encoded in
a  single jump rate.

 In \cite{AG2}, starting from a spanning tree,  Andrieux and Gaspard have derived a fluctuation theorem for the currents
 along the chords, whose conjugate
 variables are the affinities  associated to the oriented cycles in the
 fundamental set. Under the Schnakenberg condition, this fluctuation
 theorem can be restated in terms of the macroscopic forces keeping
 the system out--of--equilibrium instead of cycle affinities.

Let us explain our theoretical contribution in this direction.
First, we recover the above results of Andrieux and Gaspard by a
different approach. Indeed, the fluctuation theorem presented for
sums of several generators in Section 2.3 of \cite{LS} is not suited
for random walks on graphs with multiple edges, but as we show its
simple proof can be easily adapted to the present case. This
slightly extended result immediately leads to the fluctuation
theorem for the family of currents along all oriented edges, and
therefore to the fluctuation theorem for the  currents referred to a
fundamental set of oriented cycles.
 We point out that the
fundamental sets of oriented cycles, canonically associated to the
spanning trees of the graph $\cG$, do not cover all possible bases
of the cycle space.  We  then  extend the network theory presented
in \cite{SCH} and  prove a fluctuation theorem for the ``currents"
referred to a generic basis of oriented cycles. When dealing with a
generic basis, chords disappear and  the definition of the variables
conjugated to the affinities of the oriented cycles in the basis is
more algebraic. We call these conjugate variables {\sl generalized
currents}. Their definition needs to associate to each random
trajectory an oriented cycle, which must then be decomposed in the
basis. The generalized currents are the coefficients in this
decomposition. Their definition is essentially  algebraic and
differs from the {\sl derived chain}  discussed in
\cite{JQQ}[Chapters 1,2]. In the derived chain method, for each
oriented cycle $\cC$  one counts the number of times $\o_t(\cC)$
that the cycle has been realized by the trajectory up to a given
time $t$ (there is nothing algebraic): in \cite{JQQ} the authors
show that this quantity, rescaled by $t$, converges a.s. to some
circulation number $\o(\cC)$ and show that the entropy production
rate in the steady state can be expressed in terms of
 the circulation numbers. For a discussion on fluctuation theorems
satisfied by $\o_t (\cC)$ we refer to \cite{AG4}.


The above cycle decomposition of the trajectory is particularly
useful when looking for a fluctuation theorem of additive
functionals of the Markov chain. As example of application in this
direction, we consider a very general discrete model for a molecular
motor moving along a periodic filament under the action of an
external force $f$, transforming the chemical energy from ATP
hydrolysis to mechanical work. The forces keeping the system
out--of--equilibrium are the external force $f$ and the chemical
potential difference $\D \mu$ associated to ATP hydrolysis.
 In the last years, much attention have been devoted to
 the thermodynamics of small systems supported by very fast  technological
 improvements  (see for example \cite{BLR,Ri} and references therein).  Molecular motors
 are  proteins  working as motors at the nanoscale,  with very interesting thermodynamic aspects, which have  been
 much investigated both  from a theoretical and an experimental viewpoint \cite{Ho,JAP,KF,Re}.
 The fluctuation theorem for molecular motors with two or three chemical states have been proved in
 \cite{AG0} and \cite{LM,LLM0,LLM1,LM1}. The small number
 of chemical states allows a detailed analysis based on matrix computations, leading to more information than the
 Gallavotti--Cohen--type symmetry for large deviations. In the general case,  matrix computations
 become not reasonable. On the other hand, the developed theory for
 fluctuation theorems associated to cycle decompositions allows to
 easily prove in full generality the fluctuation relation obtained in \cite{LLM0,LLM1}
  for only two chemical
 states. More precisely, calling $x_t$ the position of the molecular motor along the filament and $z_t$
  the ATP consumption, it holds
\begin{equation}\label{fatto}
 \vartheta ( \l, \g) = \vartheta (\overline{f\b} - \l, \b\D\mu -\g)\,, \qquad \l , \g \in \bbR,
 \end{equation}
where, roughly speaking, the function  $\vartheta (\l, \g)$ is
characterized  by the identity
   \begin{equation}\text{$e^{- \vartheta (\l, \g) t }\sim \bbE( e^{- \l \bar x_t - \g z_t})$}\,, \qquad  t  \gg
   1\,,
   \end{equation}
    and where
 $\overline{f\b}$  and $\bar x_t$ denote the dimensionless quantities $f \b
\times [1m]$ and $x_t /[1m]$ ($[1m]$ being the length unit). Of
course, $\b= 1/ kT$ where  $k$ is the Boltzmann's constant and $T$
the absolute temperature.
\subsection{Outline of the paper} In Section \ref{pierpi} we extend
the fluctuation theorem for sums of generators (see Theorem
\ref{fluttuante}). In Section \ref{corrente} we discuss some
fluctuation theorems  (Fact \ref{robots} and Theorem \ref{basis})
related to cycle decompositions (fundamental set of oriented cycles
and basis, respectively). All proofs are postponed to Sections
\ref{integ_pierpi}, \ref{chiudo}, \ref{integ_FT_basis}. Finally, in
Section \ref{motorino} we discuss the biochemical application of
Theorem \ref{basis} leading to relation \eqref{fatto}.

\section{Fluctuation Theorem for sums of Markov generators}\label{pierpi}
As starting point, we extend the fluctuation theorem obtained in \cite{LS}[Section 2.3].
  As in \cite{LS} we consider a continuous--time Markov chain
$(X_t)_{t\geq 0}$ on a finite state space $\cS$ having Markov
generator
\begin{equation}\label{generatore} \cL f (\s) = \sum _{\s': \, \s' \not = \s} k(\s,
\s') \bigl( f(\s')-f(\s) \bigr)\,, \qquad \s \in
\cS\,.\end{equation} We assume that  the Markov chain is irreducible
and that  the transition rates satisfy the positivity relation
\begin{equation}\label{pr}
k(\s, \s')>0 \; \Longleftrightarrow\; k(\s', \s) >0\,.
\end{equation}
The last  assumption is very reasonable in physical systems,  since
for a real transition between physical states one expects that, at
least with very low probability, the opposite transition is
possible. We remark that, due to irreducibility and since $\cS$ is
finite, there exists a unique stationary distribution  of the
Markov chain.

\medskip

We decompose the generator $\cL$ as a sum of $m$ generators
$\cL^{(i)}$. More precisely, we fix  a family of numbers
$k^{(i)}(\s, \s')$, parameterized by the index  $i=1, \dots, m $ and
by the   pairs $(\s, \s')$ of distinct elements in $\cS$. We assume
that: \begin{itemize}

\item[(A1)]  $ k^{(i)}(\s, \s') \geq 0$ for all $i:1\leq i \leq m $ and
$\s\not  =\s'$ in $\cS$;

\item[(A2)]  $\sum_{i=1}^m k ^{(i)} (\s, \s')= k (\s,\s')$ for all
$\s\not  =\s'$ in $\cS$;

\item[(A3)]  $k^{(i)}(\s,
\s')>0 \; \Longleftrightarrow\; k^{(i)}(\s', \s) >0$, for all
$i:1\leq i \leq m$ and $\s\not =\s'$ in $\cS$.
\end{itemize}
Trivially, the above family $\bigl\{ k^{(i)}(\s,\s')\bigr\}
_{i,\s,\s'}$ corresponds to a representation of  $\cL$ as sum of
Markov generators, since we can write $\cL= \sum _{i=1}^m \cL^{(i)}$
where $\cL^{(i)}$ denotes the Markov generator on $\cS$ defined as
$\cL^{(i)} f(\s)= \sum _{\s': \s' \not =\s } k^{(i)}(\s, \s') \bigl(
f(\s')- f(\s) \bigr)$.

\smallskip

We point out that  our assumptions (A1), (A2), (A3) coincide with
the ones stated   in \cite{LS}[Section 2.3], with the only exception
that here we have dropped the additional assumption of \cite{LS}
that, given $\s \not = \s'$, there exists at most one index $i$ such
that $k^{(i)}(\s,\s')>0$.

\smallskip

The above construction suggests to think of  the Markov chain as a
random walk on a connected graph $\cG$ with multiple edges and no
loop: the vertexes of $\cG$ are given by the states of $\cS$, while
two distinct states $\s, \s'$ are linked by as many edges as the
indexes $i$ for which $k^{(i)}(\s, \s'), k^{(i)}(\s',\s)>0$, each
edge is labeled by the corresponding index $i$.  Then, $k^{(i)} (\s,
\s')$ represents the probability rate for a jump along the
$i$--labeled edge from $\s$ to $\s'$. Such an  interpretation of the
Markov chain as random walk on a graph with multiple edges   becomes
particularly relevant in biochemical applications and is essential
when considering thermodynamical forces and affinities
\cite{H,SCH,Ga,AG2}.

\smallskip

 We formulate the fluctuation theorem in terms of  the random walk on the
graph $\cG$ presented above. By a trajectory up to time $t$, we mean the path  $(X_s: s \in [0,t])$ together with the
knowledge of the edges along which   the walker has moved.
Given $i:1
\leq i \leq m$ and states $\s\not = \s'$, we define the {\sl weight}
\begin{equation}
 w^{(i)}(\s, \s'):= \begin{cases}
 \ln \frac{k^{(i)}(\s, \s') }{ k^{(i)}(\s', \s)} & \text{ if }
 k^{(i)}(\s,\s'), k^{(i)}(\s', \s)>0\,;\\
 0 & \text{ otherwise}\,.
\end{cases}
\end{equation}
To each trajectory up to time $t$, visiting the states $\s_0,
\s_1, \dots, \s_n$  (listed in chronological order, $n$ being a
random integer) and jumping along the (unoriented) edges indexed respectively by $i_0, i_1, \dots, i_{n-1}$,
 we associate the functions $W^{(i)}(t) $, $i: 1\leq
i \leq m$, defined as
\begin{equation}
W^{(i)}(t) := \sum _{k=0}^{n-1} w^{(i)}(\s_k, \s_{k+1} )\d_{i_k=i}\,.
\end{equation}
Shortly, every time the walker jumps along an $i$--edge, the function  $W^{(i)}$ increases of the weight $w^{(i)}(\s, \s')$, $\s$ and $\s'$
being the initial state and the final state of the jump.

\medskip

In what follows, given a distribution $\nu$ on $\cS$  we denote by
$\bbP_\nu$ and $\bbE_\nu$ the probability measure and  the
expectation w.r.t.  the Markov chain $X_t$  with knowledge of the
crossed edges, starting  with initial distribution $\nu$. If $\nu =
\d_\s $, we simply write
 $\bbP_\s$ and $\bbE_\s$ instead of $\bbP_\nu$ and $\bbE_\nu$.

 We point out that for any $\l_1, \dots, \l_m\in \bbR$, the
moment generating  function  $\bbE_\s \left[  e^{-\sum _{i=1}^m
\l_i
 W^{(i)}(t)}\right] $ is finite. Indeed, all the weights $w^{(i)} (\s,
 \s')$ are finite, while by a simple coupling argument one gets that
 the random number of jumps in the time
 interval $[0,t]$ is stochastically dominated by a suitable Poisson
 variable, which has finite  moment generating function.

Similarly to \cite{LS}[Section 2.3] (with the exception that here we
consider arbitrarily initial distributions and we have dropped a
technical  assumption), the following holds:
\begin{theorem}\label{fluttuante} Given a distribution $\nu$ on $\cS$,
for each $\l_1, \dots, \l_m\in \bbR$ the limit
\begin{equation}\label{orologio} e(\l_1, \dots, \l_m):= \lim_{t \to \infty}
-\frac{1}{t} \ln \bbE_\nu \left[ e^{-\sum _{i=1}^m \l_i
W^{(i)}(t)}\right]
\end{equation}
exists,  is finite and does not depend on $\nu$.
 Moreover, the following fluctuation relation is valid:
 \begin{equation}\label{fr}
 e(\l_1, \dots, \l_m)= e(1-\l_1, \dots, 1-\l_m) \,.
 \end{equation}
\end{theorem}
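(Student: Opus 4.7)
The plan is to follow the spectral/Feynman--Kac approach of \cite{LS}[Section 2.3], adapted to the multi-edge setting. First I would introduce the \emph{tilted generator} $\cL_{\l}$ acting on functions $f:\cS\to\bbR$ by
\begin{equation}
\cL_{\l} f(\s) \;=\; \sum_{\s'\neq \s}\sum_{i=1}^m k^{(i)}(\s,\s')\,e^{-\l_i w^{(i)}(\s,\s')}\, f(\s') \;-\; r(\s) f(\s),
\end{equation}
with $r(\s)=\sum_{\s'\neq\s} k(\s,\s')$. A standard Dynkin/Feynman--Kac computation, conditioning on the first jump and using the independence of the Poissonian jump structure, shows that $g_t(\s):=\bbE_\s\!\left[e^{-\sum_i \l_i W^{(i)}(t)}\right]$ is the unique solution of $\partial_t g_t=\cL_{\l} g_t$ with $g_0=\id$, so $g_t=e^{t\cL_{\l}}\id$ and $\bbE_\nu[\,\cdot\,] = \langle \nu, e^{t\cL_{\l}}\id\rangle$.

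Second, I would invoke Perron--Frobenius. By assumptions (A2) and (A3), the off-diagonal entries of $\cL_{\l}$ are nonnegative and share the same zero-pattern as those of $\cL$; irreducibility of the Markov chain therefore gives irreducibility of $\cL_{\l}+c\,\id$ for $c$ large. Hence $\cL_{\l}$ admits a simple real eigenvalue $\xi(\l)$ strictly dominating the real parts of all other eigenvalues, with strictly positive right and left eigenvectors $u_\l,v_\l$. Consequently
\begin{equation}
\bbE_\nu\!\left[e^{-\sum_i \l_i W^{(i)}(t)}\right] \;=\; \langle v_\l,\nu\rangle\,\langle \mu_\l, \id\rangle\, e^{\xi(\l) t}\,(1+o(1)),
\end{equation}
which proves at once that the limit \eqref{orologio} exists, is finite, does not depend on the initial law $\nu$, and equals $e(\l)=-\xi(\l)$. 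Finiteness of the moment generating function (mentioned in the text before the statement) guarantees we are not in a degenerate regime.

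Third, for the symmetry I would exploit the key algebraic identity: whenever $k^{(i)}(\s,\s')$ and $k^{(i)}(\s',\s)$ are both positive,
\begin{equation}
k^{(i)}(\s,\s')\,e^{-\l_i w^{(i)}(\s,\s')} \;=\; k^{(i)}(\s,\s')^{1-\l_i}\,k^{(i)}(\s',\s)^{\l_i},
\end{equation}
which is manifestly invariant under the simultaneous exchange $\l_i\mapsto 1-\l_i$ and $(\s,\s')\mapsto(\s',\s)$. By (A3) the terms that vanish on one side vanish on the other, so summing over $i$ gives $\cL_{\l}(\s,\s')=\cL_{1-\l}(\s',\s)$ off-diagonally, while the diagonal entries $-r(\s)$ are independent of $\l$. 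Thus $\cL_{\l}=\cL_{1-\l}^{T}$. Transposition preserves the spectrum, so $\xi(\l)=\xi(1-\l)$, which is exactly \eqref{fr}.

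The main obstacle is essentially bookkeeping: in \cite{LS} each unordered pair $\{\s,\s'\}$ carried at most one index $i$, while here several labels may contribute to the same pair. One must therefore verify (i) that after summing over $i$ the off-diagonal block of $\cL_{\l}$ still has the irreducibility pattern of $\cL$ (this uses (A2): $\sum_i k^{(i)}=k$ has symmetric support by \eqref{pr}), and (ii) that the $\l\leftrightarrow 1-\l$ symmetry survives the summation (it does, because it holds termwise in $i$). Once these two observations are in place, the argument reduces to the spectral proof in \cite{LS} without further change.
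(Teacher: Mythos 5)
Your proposal is correct and follows essentially the same route as the paper: tilt the generator using the identity $k^{(i)}(\s,\s')e^{-\l_i w^{(i)}(\s,\s')}=k^{(i)}(\s,\s')^{1-\l_i}k^{(i)}(\s',\s)^{\l_i}$, identify $g_t=e^{t\cL_\l}\id$, apply Perron--Frobenius to get a simple dominant eigenvalue with positive eigenvector (hence existence, finiteness, and $\nu$-independence of the limit), and deduce \eqref{fr} from the transposition symmetry $\cL_\l^{T}=\cL_{1-\l}$. The only cosmetic difference is that the paper derives the forward equation $\partial_t g_t=\cL_\l g_t$ by an explicit $\e$-expansion conditioning on zero/one/more jumps in $[0,\e]$, and proves the limit via the sandwich $v_{\max}^{-1}v\le\id\le v_{\min}^{-1}v$ rather than a spectral expansion, but these are interchangeable with your Feynman--Kac phrasing.
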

The proof is given in Subsection \ref{integ_pierpi_proof}.  Due to
the theory of large deviations (cf. \cite{V}, Section 1.5 in
\cite{JQQ} and references therein), the  Legendre--Fenchel transform
$$I(z_1, \dots, z_m ):= \sup _{\l_1, \dots, \l_m \in \bbR}\Big\{
 e (\l_1, \dots, \l_m)- \sum_{i=1}^m \l_i z_i \Big\}
$$
is convex, lower--semicontinuous and non--negative. Moreover, it has
compact level sets and satisfies $\inf _{z \in \bbR^m} I(z)=0$. In
addition, the random vector $W^{(i)}(t)/t $ satisfies w.r.t.
$\bbP_\nu$ a large deviation principle with rate function $I$
(independent from $\nu$). As discussed in \cite{LS}, the fluctuation
relation \eqref{fr} can be restated in terms of the rate function
$I$ as
\begin{equation}\label{frbis}
I(z_1, \dots, z_m)- I(-z_1, \dots, -z_m)=-\sum _{i=1}^m z_i \,,
\qquad \forall z \in \bbR^m\,.
\end{equation}
Physical implications of the above relations in the steady state
are discussed in \cite{LS}.

\smallskip

We point out that the above Theorem \ref{fluttuante} leads to an infinite family of
fluctuation relations \eqref{fr}, parameterized by the decompositions $\cL=
\sum _{i=1}^m \cL^{(i)}$. Only some of them refer  to relevant
physical quantities. 
Some physically relevant cases are discussed in \cite{LS}, the
fundamental one corresponds to  $m=1$, in this case $W^{(1)}$ is
related with the  entropy production of the system. As explained in
Section \ref{corrente}, by a suitable choice of the decomposition
$\cL= \sum _{i=1}^m \cL^{(i)}$ one immediately recovers from the
above theorem
 the fluctuation theorems for
currents obtained in \cite{AG2}.

\section{Fluctuation theorem for currents}\label{corrente}

The dynamical evolution of several physical and chemical systems is already  described
 by a random walk on a finite graph, with multiple edges and no loop. Here, we can apply Theorem \ref{fluttuante}
 to this context, obtaining an equivalent fact which is simply  formulated in a different language. Moreover, we re--derive the  fluctuation
 theorem for currents with respect to a fundamental set of oriented cycles obtained in \cite{AG2} and consider a similar problem when working with more
 general sets of oriented cycles (Theorem  \ref{basis} below).


\smallskip

We start now  with a finite connected  unoriented graph $\cG$ whose
vertexes are given by the states in $\cS$.   $\cG$ can have multiple
edges but  no loop, i.e. between two distinct vertexes $\s , \s'$
there can be several edges, while there is no edge from a state to
itself. We denote by  $\cE_o$ ($o\,=\,$oriented) the set of edges of
$\cG$ with orientation (each edge in $\cG$ has two possible
orientations and therefore corresponds to two oriented edges in
$\cE_o$).
Given $\ell\in \cE_o$, we denote by $\bar \ell $ the edge obtained
from $\ell$ by inverting its orientation. Moreover,
we  write $\ell_i$ and $ \ell_f$ for the states in $\cS$ such that
$\ell $ goes from $\ell_i$ to $\ell_f$ (initial and final states).
It is convenient to assign a canonical orientation to each
unoriented edge in $\cG$. We denote by $\cE_c$ ($c\,=\,$canonical)
the set of canonically oriented edges  of $\cG$. Note that each
oriented edge in $\cG$ is given by $\ell$ or $\bar \ell$, for some
$\ell \in \cE_c$.
\medskip

We fix a family of positive numbers $\{ k(\ell)  \}_{\ell \in \cE_o}$
and consider the continuous time  random walk $X_t$ on $\cG$,
jumping  along the edge $\ell$ (following  the associated
orientation) with probability   rate $k(\ell)$. In particular, the
Markov generator of $X_t$ is given by
\begin{equation}
 \cL f( \s) = \sum _{ \ell \in \cE_o: \s = \ell_i  }
k(\ell)\left[ f(\ell_f) - f(\s) \right]\,.
\end{equation}
Since $\cG$ is a connected graph and since the constants $k(\ell)$
are positive, $X_t$ belongs to the class of Markov chains on $\cS$
introduced in Section \ref{pierpi}.


\smallskip

 To each oriented edge $\ell \in \cE_o$, we associate the weight
 \begin{equation}\label{matteo}  w(\ell)= \ln
 k(\ell) / k(\bar\ell) \end{equation} and we define $N_\ell (t)$ as the
number of times   the random walk jumps along  $\ell$ minus
the number of times the random walk jumps along   $\bar
\ell$, up to time $t$. In words, $N_\ell(t)$ is the flux along the oriented edge $\ell$.
Considering the sum decomposition
$\cL= \sum _{\ell \in \cE_c} \cL^{(\ell)}$ where $\cL^{(\ell)}$ is the Markov generator
associated only to the jumps along the oriented edges $\ell$ and $\bar \ell$, it is trivial to
 check that Theorem \ref{fluttuante} implies the following fact:

\begin{fact}\label{basilare}
For any initial distribution $\nu$ and   for any family $\{\l _\ell:
\ell \in \cE _c\}$ of real numbers, the limit
\begin{equation}\label{fr_nonna_q}
q\bigl( \{\l _\ell: \ell \in \cE_c \} \bigr):= \lim_{t \to \infty}
-\frac{1}{t} \ln \bbE_\nu \left[ e^{-\sum _{\ell \in \cE_c}  \l_\ell
 N_\ell (t)} \right]
\end{equation}
exists, is finite and does not depend on $\nu$. Moreover, it holds
\begin{equation}\label{voce_q}
q\bigl( \{\l _\ell: \ell \in \cE_c \} \bigr)= q\bigl( \{w(\ell)-\l _\ell:
\ell \in \cE_c \} \bigr)\,.
\end{equation}
\end{fact}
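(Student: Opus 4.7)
The plan is to specialise Theorem \ref{fluttuante} to the decomposition $\cL = \sum_{\ell \in \cE_c} \cL^{(\ell)}$ in which $\cL^{(\ell)}$ retains only the two jump rates along the unoriented edge represented by $\ell$. Concretely, for each $\ell \in \cE_c$ I would set $k^{(\ell)}(\ell_i, \ell_f) := k(\ell)$, $k^{(\ell)}(\ell_f, \ell_i) := k(\bar\ell)$, and $k^{(\ell)}(\sigma, \sigma') := 0$ for every other ordered pair. Since each oriented edge of $\cG$ is either an element of $\cE_c$ or the reverse of one, summing over $\ell \in \cE_c$ recovers the total rate $k(\sigma, \sigma') = \sum_{\vec\ell:\, \vec\ell_i = \sigma,\, \vec\ell_f = \sigma'} k(\vec\ell)$, and hypotheses (A1)--(A3) are immediate from the positivity of the $k(\ell)$.

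Next I would identify the associated weight. By construction $w^{(\ell)}(\ell_i, \ell_f) = \ln k(\ell)/k(\bar\ell) = w(\ell)$, and $w^{(\ell)}$ equals $-w(\ell)$ on the reversed pair and vanishes elsewhere. Hence every forward jump along the edge labelled $\ell$ contributes $+w(\ell)$ to $W^{(\ell)}(t)$ and every backward jump contributes $-w(\ell)$, so $W^{(\ell)}(t) = w(\ell)\, N_\ell(t)$. Theorem \ref{fluttuante} then supplies the existence, finiteness and $\nu$-independence of
\begin{equation*}
e(\{\lambda_\ell\}) = \lim_{t \to \infty} -\frac{1}{t} \ln \bbE_\nu \bigl[ e^{-\sum_{\ell \in \cE_c} \lambda_\ell w(\ell) N_\ell(t)} \bigr]
\end{equation*}
together with the symmetry $e(\{\lambda_\ell\}) = e(\{1 - \lambda_\ell\})$.

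When every $w(\ell)$ is nonzero, the change of variable $\mu_\ell = \lambda_\ell w(\ell)$ is a bijection of $\bbR^{\cE_c}$; it converts the displayed identity into the defining formula \eqref{fr_nonna_q} for $q(\{\mu_\ell\})$ and turns $\lambda_\ell \mapsto 1 - \lambda_\ell$ into $\mu_\ell \mapsto w(\ell) - \mu_\ell$, which is exactly \eqref{voce_q}. The only step that is not a direct substitution is the degenerate case $w(\ell) = 0$ for some $\ell$, i.e.\ $k(\ell) = k(\bar\ell)$; I expect this to be the one mild obstacle, since there $W^{(\ell)}$ carries no information about $N_\ell$. It can be handled by a perturbation argument: replace $k(\ell)$ and $k(\bar\ell)$ by $(1+\epsilon) k(\ell)$ and $(1-\epsilon) k(\bar\ell)$ for small $\epsilon > 0$ (so that every perturbed weight is nonzero), apply the above argument to the perturbed chain, and let $\epsilon \downarrow 0$, invoking continuity of the limiting generating function in the rates via the Perron--Frobenius representation of $q$ as the top eigenvalue of the tilted generator with off-diagonal entries $\sum_{\vec\ell:\, \vec\ell_i = \sigma,\, \vec\ell_f = \sigma'} k(\vec\ell)\, e^{\mp\mu_\ell}$.
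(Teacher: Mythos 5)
Your decomposition $\cL = \sum_{\ell\in\cE_c}\cL^{(\ell)}$, with $\cL^{(\ell)}$ carrying only the two jump rates along the edge $\ell$, is exactly the paper's choice; the paper declares the deduction ``trivial to check'' and leaves implicit the identity $W^{(\ell)}(t)=w(\ell)N_\ell(t)$ and the change of variables $\mu_\ell=\lambda_\ell\,w(\ell)$, both of which you supply correctly. You are also right to flag the case $w(\ell)=0$ as a genuine obstruction to a black-box use of Theorem~\ref{fluttuante}: when some weights vanish, the generating function of $\{w(\ell)N_\ell(t)\}$ controlled by that theorem does not determine that of $\{N_\ell(t)\}$, and the change of variables is not a bijection. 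The paper's one-sentence derivation does not acknowledge this case.

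The perturbation you sketch, however, does not close the gap as stated. Sending $\epsilon\downarrow 0$ shows that $q_\epsilon(\{\mu_\ell\})$, which for $\epsilon>0$ equals the top Perron--Frobenius eigenvalue of the perturbed tilted generator, converges to the top eigenvalue of the unperturbed tilted generator $\cL_{\{\mu_\ell\}}$; but it does not show that $\lim_{t\to\infty}-\frac{1}{t}\ln\bbE_\nu\bigl[e^{-\sum_\ell\mu_\ell N_\ell(t)}\bigr]$ \emph{exists} for the original chain, nor that it equals that eigenvalue. You would need uniform-in-$\epsilon$ control of the $t\to\infty$ limit to interchange the two limits, and the ``continuity via the Perron--Frobenius representation of $q$'' that you invoke already presupposes the eigenvalue representation for the unperturbed chain, which is the very thing to be proved. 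The clean repair bypasses both the change of variables and the perturbation: run the ODE argument from the proof of Theorem~\ref{fluttuante} directly on $g_t(\sigma)=\bbE_\sigma\bigl[e^{-\sum_\ell\mu_\ell N_\ell(t)}\bigr]$, obtaining $\frac{d}{dt}g_t=\cL_{\{\mu_\ell\}}g_t$ with $[\cL_{\{\mu_\ell\}}]_{\sigma,\sigma'}=\sum_{\ell\in\cE_c:\,\ell_i=\sigma,\ell_f=\sigma'}k(\ell)e^{-\mu_\ell}+\sum_{\ell\in\cE_c:\,\ell_f=\sigma,\ell_i=\sigma'}k(\bar\ell)e^{\mu_\ell}$ off the diagonal. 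The relations $k(\ell)e^{-w(\ell)}=k(\bar\ell)$ and $k(\bar\ell)e^{w(\ell)}=k(\ell)$ then give $\cL^*_{\{w(\ell)-\mu_\ell\}}=\cL_{\{\mu_\ell\}}$ for all real $\mu_\ell$, with no restriction on the signs or vanishing of the $w(\ell)$, and \eqref{voce_q} follows as in the proof of Theorem~\ref{fluttuante}.
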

We point out that, when  $\nu$ is  the stationary distribution, the
above result equals formula (84) in \cite{AG2}.

\subsection{Fluctuation theorem for currents w.r.t. a fundamental set of oriented cycles}\label{prelimare_SCH}
We recall here some basic concepts concerning the oriented cycles in a finite graph \cite{H,SCH,B}.
\cite{SCH}  represents a very concise reference.

\smallskip

An {\sl oriented  cycle } $\cC$ in $\cG$ can be described   by a string of
oriented edges $(b_1,\dots, b_k)$,  $b_i \in \cE_o $, such that  the vertex in which
$b_i$ enters equals the vertex from which  $b_{i+1}$ exits (with the
convention $k+1=1$). We convey  that  the strings $(b_1, \dots , b_k)$ and $(b_i,
b_{i+1}, \dots, b_k, b_1, \dots, b_{i-1})$ identify the same
oriented cycle $\cC$.
 When disregarding the orientation of $\cC$, we
call it simply cycle.

We fix  a {\sl maximal tree} (also called {\sl spanning tree}) $T$ on $\cG$ \cite{SCH} , i.e. $T$ is a
unoriented subgraph of $\cG$, containing all vertexes of $\cG$ and
containing no cycle. The  edges $\ell \in \cE_c$ which do not belong
to $T$ when disregarding orientation  are called {\sl chords} of
the maximal tree $T$. We enumerate the edges in  $\cE_c$ as $\ell_1,
\ell_2, \dots, \ell_n$ where  $\ell_1, \dots, \ell_s$ are the
chords associated to $T$. In particular, $s$ is the number of chords
of $\cG$. It is simple to prove that $s$ equals $e-v+1$, where $e$ is the number of edges of $\cG$ and $v$ the number of vertexes. In
particular, $s$ does not depend on the choice of the maximal tree and will be referred in what follows as {\sl chord number}.

  When adding to $T$ a chord $\ell_i $
(disregarding its orientation), one obtains a graph containing a
unique cycle. We give to this cycle the  orientation induced by the
chord $\ell_i$ and call $\cC_i$ the resulting oriented cycle. The
set $\{\cC_1, \dots, \cC_s\}$ is called a {\sl fundamental set of
oriented cycles}. We now explain the origin of the name. Given an oriented cycle $\cC$ and an oriented
edge $\ell \in \cE_o$, we denote by $S_\ell (\cC)$ the number of times $\ell$  appears in $\cC$
minus the number of times the reversed edge $\bar \ell$ appears in $\cC$. Then
\begin{equation}\label{somma_cicli}
 \cC= \sum _{j=1}^s S_{\ell_j}( \cC) \cC_j\,.
\end{equation}
The meaning of the above identity is clarified by the following definition:
\begin{definition}\label{quadroverde}
Given oriented cycles $\cC_1, \dots, \cC_k,\, \cC_1' , \dots, \cC_r'$ and given real numbers
$a_1, \dots , a_k, \, a_1' ,\dots, a_r'$ we set
\begin{equation}\sum _{i=1}^k a_i \cC_i = \sum _{j=1}^r a_j'\cC_j
\; \Longleftrightarrow \;  \sum _{i=1}^k a_i  S_\ell(\cC_i) = \sum
_{j=1}^r a_j'S_{\ell}(\cC_j') \; \;\; \forall \ell \in \cE_o\,.
\end{equation}
\end{definition}


We point out that, given a fundamental set $\{\cC_1, \dots,\cC_s\}$,
the oriented edges appearing in $\cC_j$   are all distinct, that
$\ell $ and $\bar \ell$ cannot both appear in $\cC_j$ and that, for
each chord $\ell_i$,  it holds $\cS_{\ell_i} (\cC_j)= \d_{i,j}$ for
$1\leq i,j \leq s$.

\smallskip

Recall the definition of the weight \eqref{matteo}. Given a generic oriented cycle $\cC$ we define its affinity as
 $\cA(\cC)= \sum _{\ell} w(\ell)$, where the sum is among the edges $\ell$ visited by the cycle with the proper orientation.
Alternatively, one can set
\begin{equation}\cA( \cC)= \sum _{\ell \in \cE_c} S_\ell (\cC) w(\ell)\,.
\end{equation}
We recall that in the above definition one can  replace $w(\ell)$ by the ratio between  the
 local flux along $\ell$ and the local
flux along $\bar \ell$  w.r.t. the steady state or any other
probability measure  $\nu$ on the vertex set, giving positive
measure to each state, i.e.  $w(\ell)$ can be replaced by  $\ln
\bigl[ \nu(\ell _i) k(\ell)/ \nu (\ell_f) k(\bar \ell) \bigr]$. This
leads to an equivalent definition of cycle affinity.

\smallskip


 Recall the definition of the random variable $N_\ell (t)$
given just after \eqref{matteo}. $N_\ell (t)$ represents the flux
along the oriented edge $\ell$ (its generalized derivative is the
current).
 In \cite{AG2}, the authors have
proved the following result (which trivially leads to a fluctuation
theorem for large deviations), that we state here for a general
initial distribution (only  steady states are considered in
\cite{AG2}):

\begin{fact}\label{robots}
For any initial distribution $\nu$ and   for any  $\l_1, \dots,
\l_s\in \bbR $, the limit
\begin{equation}\label{fr_nonna_AG}
Q( \l_1, \dots, \l_s  ):= \lim_{t \to \infty} -\frac{1}{t} \ln
\bbE_\nu \left[ e^{-\sum _{i=1}^s  \l_i N_{\ell_i} (t)} \right]
\end{equation}
exists, is finite and does not depend on $\nu$. Moreover, it holds
\begin{equation}\label{voce_AG}
Q( \l_1, \dots, \l_s  )= Q \bigl( \cA( \cC_1)- \l_1, \dots, \cA( \cC_s)-
\l_s)\,.
\end{equation}
\end{fact}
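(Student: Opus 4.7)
The plan is to reduce Fact \ref{robots} to Fact \ref{basilare} by specializing the parameters $\mu_\ell$, and to exploit the fact that $\{\cC_1,\dots,\cC_s\}$ is a basis of the cycle space (cf.\ \eqref{somma_cicli}) in order to identify which changes of the $\mu_\ell$ leave $q$ invariant. To begin with, the existence, finiteness and $\nu$--independence of $Q(\lambda_1,\dots,\lambda_s)$ are immediate: setting $\mu_{\ell_i}=\lambda_i$ for the chord indices $i\le s$ and $\mu_{\ell_j}=0$ for the tree--edge indices $j>s$, one has $\sum_{\ell\in\cE_c}\mu_\ell N_\ell(t)=\sum_{i=1}^s\lambda_i N_{\ell_i}(t)$, hence $Q(\lambda_1,\dots,\lambda_s)=q(\{\mu_\ell\})$, and Fact \ref{basilare} applies.

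The core of the argument is the following invariance lemma: if $\{\mu_\ell\}_{\ell\in\cE_c}$ and $\{\tilde\mu_\ell\}_{\ell\in\cE_c}$ satisfy $\sum_\ell \mu_\ell S_\ell(\cC_i)=\sum_\ell\tilde\mu_\ell S_\ell(\cC_i)$ for every $i=1,\dots,s$, then $q(\{\mu_\ell\})=q(\{\tilde\mu_\ell\})$. To prove it, set $\alpha_\ell:=\mu_\ell-\tilde\mu_\ell$; the hypothesis combined with \eqref{somma_cicli} yields $\sum_\ell\alpha_\ell S_\ell(\cC)=0$ for \emph{every} oriented cycle $\cC$. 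A standard path--independence argument on the connected graph $\cG$ then produces an $f:\cS\to\bbR$ with $\alpha_\ell=f(\ell_f)-f(\ell_i)$, and a telescoping per--jump calculation gives $\sum_\ell\alpha_\ell N_\ell(t)=f(X_t)-f(X_0)$, bounded uniformly in $t$ by $2\max_\s|f(\s)|$. Hence $e^{-\sum_\ell\mu_\ell N_\ell(t)}$ and $e^{-\sum_\ell\tilde\mu_\ell N_\ell(t)}$ differ only by a bounded multiplicative factor and their $-\tfrac{1}{t}\log$ limits coincide.

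With the lemma in hand the proof is completed as follows. Starting from $Q(\lambda_1,\dots,\lambda_s)=q(\{\mu_\ell\})$ with the choice above, apply the symmetry in Fact \ref{basilare} to obtain $q(\{\mu_\ell\})=q(\{w(\ell)-\mu_\ell\})$. Let $\tilde\mu_\ell:=w(\ell)-\mu_\ell$; then, using the definition of the cycle affinity and the chord property $S_{\ell_i}(\cC_j)=\delta_{ij}$ for $i,j\le s$,
\begin{equation*}
\sum_\ell \tilde\mu_\ell S_\ell(\cC_i)=\cA(\cC_i)-\sum_\ell\mu_\ell S_\ell(\cC_i)=\cA(\cC_i)-\lambda_i.
\end{equation*}
The same value is produced by the parameters $\tilde\mu'_{\ell_i}=\cA(\cC_i)-\lambda_i$ for $i\le s$, $\tilde\mu'_{\ell_j}=0$ for $j>s$, so the lemma gives $q(\{\tilde\mu_\ell\})=q(\{\tilde\mu'_\ell\})=Q(\cA(\cC_1)-\lambda_1,\dots,\cA(\cC_s)-\lambda_s)$, closing the chain. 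The main obstacle is the invariance lemma: recognizing that the cycle--basis hypothesis forces $\alpha$ to be a discrete gradient on $\cG$, and cashing this in as a uniform--in--$t$ bound for the telescoped sum. Both steps amount to careful bookkeeping with the two orientation conventions $\cE_c$ versus $\cE_o$ and with the divergence of $N_\ell(t)$ at each visited vertex.
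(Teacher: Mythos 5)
Your proof is correct, and it takes a genuinely different route from the paper's, though both reduce Fact~\ref{robots} to Fact~\ref{basilare}. The paper performs a direct algebraic manipulation of the exponent: writing $\sum_{i=1}^s \l_i S_{\ell_i}(\cC_t)$, it adds and subtracts $\sum_{i>s} w(\ell_i)S_{\ell_i}(\cC_t)$ and then uses \eqref{somma_cicli} so that after one application of the symmetry $q(\{\l_\ell\})=q(\{w(\ell)-\l_\ell\})$ the non--chord arguments come out \emph{exactly zero}; no auxiliary lemma is needed. You instead isolate a gauge--invariance statement as a standalone lemma: $q$ depends on $\{\mu_\ell\}$ only through the cycle integrals $\sum_\ell \mu_\ell S_\ell(\cC_i)$, because any $\alpha$ with vanishing integrals over all cycles is a discrete gradient $f(\ell_f)-f(\ell_i)$ on the connected graph $\cG$, so $\sum_\ell\alpha_\ell N_\ell(t)=f(X_t)-f(X_0)$ is uniformly bounded and cannot affect the exponential rate. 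With this lemma the derivation becomes mechanical: embed the $\l_i$ into chord slots with zeros elsewhere, apply the symmetry, and shift the resulting non--chord weights back to zero. Your version buys a cleaner conceptual picture (the generating function descends to a function on the cycle space modulo gradients, which is also the idea underlying Theorem~\ref{basis}) at the cost of a short extra lemma; the paper's version buys brevity by a well--chosen add--and--subtract. One bookkeeping point worth making explicit when writing it up: the hypothesis of your lemma only gives vanishing over the fundamental cycles $\cC_1,\dots,\cC_s$, and you must invoke \eqref{somma_cicli} to upgrade this to vanishing over every oriented cycle (including the $2$--cycles $(\ell,\bar\ell')$ between parallel edges) before the path--independence construction of $f$ is legitimate on a multigraph; you gesture at this but it deserves a sentence.
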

 The above Fact \ref{robots}  restricted to the
steady state  corresponds to formula  (39)  in \cite{AG2}. As
observed there, under Schnakenberg's conditions (see the
Introduction), this formula leads immediately to the fluctuation
theorem for currents restated in terms of macroscopic forces (see
page 124 in \cite{AG2}). In Section \ref{chiudo} we show that Fact
\ref{robots} is a simple corollary of Fact \ref{basilare}. This was
already pointed out  in \cite{AG2}, by different arguments.

\subsection{Fluctuation theorem for generalized currents associated to a basis  of oriented cycles}\label{FT_basis}
In this section, 
  we extend the fluctuation
theorem to  currents referred to (what we call) a  basis of oriented
cycles in $\cG$. This answers a natural conceptual question. In
addition, in applications, one can have identified a nice basis of
oriented cycles and desire to work with it, without looking for a
nice fundamental set of oriented cycles (see Section
\ref{motorino}).
 First we fix some general concepts,
applying some ideas of linear algebra to oriented cycles.

We say that  the family of oriented cycles $\cC_1, \dots , \cC_k$ in $\cG$ generates all the oriented cycles if for each
oriented cycle $\cC$ there exist real numbers $a_1, \dots, a_k$ satisfying
$\cC= \sum _{i=1}^k a_i \cC_i$ (recall Definition \ref{quadroverde}).
 Such a {\sl generating set}  is called {\sl basis} if
it is minimal, in the sense that it does not contain a smaller generating subfamily. It is simple to check that minimality can be replaced by
 independence, i.e.
there are not  constants $a_1, \dots, a_k$ (not all zero) such that
$\sum _{i=1}^k a_i \cC_i=\emptyset $, where $\emptyset$ denotes the
degenerate cycle with $S_\ell (\emptyset)=0$ for each oriented edge
$\ell$. In the last case, we also say that $\cC_1, \dots, \cC_k$ are
{\sl independent}. Lemma \ref{semplice} in the integrating Section
\ref{integ_FT_basis} will report some very intuitive facts
concerning bases, generating sets and independent sets. Here we
recall that all bases  have cardinality given by the chord number
$s$, that any fundamental set of oriented cycles is a basis (the
opposite implication is false, see Section \ref{integ_FT_basis}) and
that for each oriented cycle $\cC$ the coefficients $a_1, \dots,
a_s$ in the decomposition
\begin{equation}\label{decomposion}
\cC= \sum_{i=1}^s a_i \cC_i\,,
\end{equation}
where $\cC_1, \dots, \cC_s$ is a basis,
  are univocally determined. If the basis is not a fundamental set of oriented cycles the coefficients $a_1, \dots, a_s$ do not have a simple
geometric characterization as in \eqref{somma_cicli}. As showed  in Section \ref{integ_FT_basis},
 if $\cC_1, \dots, \cC_s$ is a basis and $\cC_1', \dots, \cC_s'$
is a fundamental set of oriented cycles with associated chords $\ell_1, \dots, \ell_s$,  then it holds
\begin{equation}\label{frontale_bisq}
 (a_1, \dots, a_s)= \bigl( S_{\ell_1}(\cC), \dots, S_{\ell_s}(\cC) \bigr) B^{-1}\,, \qquad B=(B_{ij})_{1\leq i,j\leq s} \,,\; B_{ij}=
S_{\ell_j}( \cC_i)
 \,.
\end{equation}

\smallskip

Let us come back to our random walk on the graph $\cG$. We want to associate
 to each trajectory up to time $t$ an oriented cycle  $\cC_t$. We
do it as follows.
Of the  trajectory   up to time $t$
 we record the sequence of oriented bonds $(b_1, \dots, b_n)$, $b_i \in \cE_o$,
which the walker  moves along, one after the other ($n$ is the number of jumps performed up to time $t$).
 We denote this path as {\sl reduced trajectory}. Recall the definition of the function  $N_\ell(t)  $
 given  after \eqref{matteo}.
 Note that $N_\ell(t)$  depends only on the reduced trajectory, since
 $N_{\ell }(t)= \sum _{i=1}^n [ \bbI (  b_i=\ell )- \bbI(b_i=\bar
 \ell)]$.
Given two different states $\s, \s'$ we fix a  path $\g_{\s,\s'}$ in $\cG$ from $\s$ to $\s'$ specifying only the visited states and the
edges along which the path evolves (there is no knowledge of jump times), i.e. $\g_{\s, \s'} $ is represented by a sequence of oriented
bonds. We then introduce a random oriented cycle $\cC_t $ on $\cG$ as follows:
\begin{definition}\label{def_Ct}
Fixed a family $\{ \g_{\s, \s' }\}_{\s\not = \s' }$, let  $(b_1,
\dots, b_n)$ be the reduced trajectory from $X_0$ to $X_t$.  If
$X_0=X_t$  we define $\cC_t$ as the oriented cycle given by the
reduced trajectory itself,
  otherwise
we define $\cC_t$ as the cycle $(b_1, \dots, b_n, c_1, \dots, c_r)$
where $\g_{X_0, X_t}= (c_1, \dots, c_r)$.
\end{definition}
We point out that
\begin{equation}\label{ciclico} N_\ell (t) = S_\ell (\cC_t)+ O(1) \qquad \forall \ell \in \cE_o
\end{equation}
where the errors $O(1)$ are uniformly bounded  as $\ell$ varies  in
$\cE_o$ and $t$ varies in  $[0,\infty)$.

\medskip

From now on, we refer to a basis $\cC_1, \dots, \cC_s$ and a family  $\{ \g_{\s, \s' }\}_{\s\not = \s' }$, fixed once and for all.
Due to the above discussion, we know that  the random numbers $a_1(t), \dots, a_s (t)$
such that
\begin{equation}\label{strada}
 \cC_t = \sum _{i=1}^s a_i (t)\cC _i,
 \end{equation}
exist and are univocally determined. Note that these random numbers
depend not only from the basis but also on the paths $\g_{\s, \s'}$.
On the other hand, choosing other paths $\g_{\s, \s'}$ would change
the random numbers $a_i(t)$ of quantities uniformly bounded as  $t$
varies in $[0, \infty)$ (as a
 byproduct of \eqref{frontale_bisq} and \eqref{ciclico}). We point out that, by \eqref{somma_cicli} and \eqref{ciclico}, if our basis is a fundamental set
 of oriented cycles with associated chords $\ell_1, \dots , \ell_s$,
  then it holds $a_i(t)= N_{\ell_i}(t)+O(1)$. In particular, a part an error of order
 $O(1)$, the number $a_i(t)$ is simply the flux along the chord $\ell_i$, which equals the time--integrated current along $\ell_i$.
 Due to this special case, we call   $a_i(t)$  {\sl generalized time--integrated current}.

 \smallskip

 We can finally state our fluctuation theorem:
\begin{theorem}\label{basis}  Fix a basis $\cC_1, \dots, \cC_s$ and paths $\bigl\{\g_{\s, \s'}\bigr\}_{\s\not=\s'}$. Then,
for any initial distribution $\nu$ and   for any  $\l_1, \dots,
\l_s\in \bbR $, the limit
\begin{equation}\label{uffa1}
Q_b( \l_1, \dots, \l_s  ):= \lim_{t \to \infty} -\frac{1}{t} \ln
\bbE_\nu \left[ e^{-\sum _{i=1}^s  \l_i a_i (t)} \right]
\end{equation}
exists, is finite and does not depend on $\nu$. Moreover, it holds
\begin{equation}\label{uffa2}
Q_b( \l_1, \dots, \l_s  )= Q_b ( \cA(\cC_1)- \l_1, \dots, \cA(\cC_s)-
\l_s)\,.
\end{equation}
\end{theorem}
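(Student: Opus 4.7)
The strategy is to reduce Theorem \ref{basis} to Fact \ref{robots} by changing basis from the given basis $\cC_1,\dots,\cC_s$ to any fundamental set of oriented cycles $\cC_1',\dots,\cC_s'$ with associated chords $\ell_1,\dots,\ell_s$, and to read the claimed symmetry \eqref{uffa2} as the image, under this change of basis, of \eqref{voce_AG}. The change of basis is encoded by the invertible matrix $B=(B_{ij})_{1\le i,j\le s}$, $B_{ij}=S_{\ell_j}(\cC_i)$, introduced in \eqref{frontale_bisq}.

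First I would rewrite the generalized time--integrated currents $a_i(t)$ in terms of the chord fluxes $N_{\ell_j}(t)$. Applying \eqref{frontale_bisq} to the random cycle $\cC_t$ and substituting $S_{\ell_j}(\cC_t)$ by $N_{\ell_j}(t)$ via \eqref{ciclico} yields
\begin{equation*}
a_i(t) \;=\; \sum_{j=1}^s (B^{-1})_{ji}\, N_{\ell_j}(t) \;+\; O(1),
\end{equation*}
with an $O(1)$ error uniform in $t$ (and bounded in terms of the fixed paths $\g_{\s,\s'}$ and of $B^{-1}$). Setting $\mu_j=\sum_i (B^{-1})_{ji}\lambda_i$, equivalently $\mu = B^{-1}\lambda$ in column--vector notation, one obtains $\sum_i \lambda_i a_i(t) = \sum_j \mu_j N_{\ell_j}(t) + O(1)$. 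Since a uniformly bounded shift in the exponent multiplies $\bbE_\nu[e^{-\sum_i \lambda_i a_i(t)}]$ by a factor bounded between two positive constants, it is killed by $-t^{-1}\log(\cdot)$ in the limit, and Fact \ref{robots} immediately provides the existence of $Q_b(\lambda_1,\dots,\lambda_s)$, its independence of $\nu$, and the identity $Q_b(\lambda)= Q(B^{-1}\lambda)$.

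The fluctuation relation \eqref{uffa2} then follows by matching the affinities under the change of basis. From \eqref{frontale_bisq} applied with $\cC=\cC_i$, together with $S_{\ell_k}(\cC_j')=\d_{jk}$, one has the cycle identity $\cC_i = \sum_j B_{ij}\,\cC_j'$; applying the linear functional $\cA(\cdot)$ gives the key algebraic step $\cA(\cC_i)=\sum_j B_{ij}\,\cA(\cC_j')$, i.e.\ $\cA(\cC')=B^{-1}\cA(\cC)$. Combining with Fact \ref{robots},
\begin{equation*}
Q_b(\lambda) \;=\; Q(B^{-1}\lambda) \;=\; Q\bigl(\cA(\cC') - B^{-1}\lambda\bigr) \;=\; Q\bigl(B^{-1}(\cA(\cC)-\lambda)\bigr) \;=\; Q_b\bigl(\cA(\cC)-\lambda\bigr),
\end{equation*}
which is \eqref{uffa2}. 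The only delicate point is keeping the transposes straight when passing between the row--vector form \eqref{frontale_bisq} and the column--vector manipulations above; once this is set up, the argument is a short linear--algebraic translation of Fact \ref{robots}, with \eqref{ciclico} absorbing all approximations into controllable $O(1)$ errors.
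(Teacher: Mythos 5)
Your proposal is correct and follows essentially the same route as the paper's own proof in Section~\ref{brodo}: fix a fundamental set with chords $\ell_1,\dots,\ell_s$, introduce the invertible matrix $B_{ij}=S_{\ell_j}(\cC_i)$, use \eqref{frontale_bisq} and \eqref{ciclico} to get $Q_b(\l)=Q(B^{-1}\l)$, then push \eqref{voce_AG} through the change of basis. The only cosmetic difference is that the paper verifies $(B\cA)_i=\cA(\cC_i)$ by direct computation at the end, while you invoke the linearity of $\cA$ (which is exactly what that computation proves), so the two arguments are the same.
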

The index $b$ refers to the term ``basis''.
Due  to the fact that $a_i(t)= N_{\ell_i}(t)+O(1)$ when the basis is a fundamental set of oriented cycles,
  the above result extends Fact \ref{robots}.
The proof  of Theorem \ref{basis} is given in Section \ref{brodo}.


\section{Applications to molecular motors moving along a polymer}\label{motorino}

 In this section we consider a molecular motor \cite{Ho} moving along a polymer (mathematically, a one dimensional
 periodic environment), under the effect of an external force $f$, using  chemical energy to produce mechanical
 work. We write $\D \mu$ for the chemical potential difference associated to ATP hydrolysis. We denote by $x_t$ the position of the molecular motor along the filament and by $z_t$ the number of
 hydrolyzed  ATPs minus the number of synthesized ATPs such that $z_0=0$.
 We  want to apply Theorem \ref{basis} to prove in full generality the
 relation \eqref{fatto} described in the Introduction.

   \smallskip

The above fluctuation relation \eqref{fatto} has been proved in
\cite{LLM0,LLM1} for a special model with two chemical states. The
method followed
 there
relies on the manipulation of  $2\times 2$ matrices (since only two
chemical states are considered there). On the other hand,  keeping
in mind the above observations concerning macroscopic forces and
affinities, one would expect the above symmetry \eqref{fatto} to be
universal. This is indeed what we prove below as confirmation of the
large flexibility of the fluctuation theorem referred to  oriented
cycles. To deal this problem in full generality one has to work with
a very large graph.

\smallskip
First of all we need to fix a  discrete kinetic model. The natural
modeling of molecular motors is by continuous   models called {\sl
ratchet models} \cite{JAP,Re}. Methods to derive a discrete model
from the continuous ratchet model have been developed in full
generality (cf. \cite{XWO} and reference therein), thus leading to
the following class of kinetic models. The model is a
continuous--time Markov chain  with state space $\bbZ\times \G$,
$\G$ being the  finite set of the chemical states of the motor, with
master  equation
\begin{multline}
\partial _t p_t(x,\s)  =
\o^\s _{x-1,x} p_t (x-1, \s) + \o ^\s_{x+1,x}  p_t (x+1, \s)
 - \bigl[ \o^\s _{x,x-1}+\o ^\s _{x,x+1}\bigr ] p_t(x,\s)  \\
 + \sum _{\s'\not=\s}\bigl[ \o^x  _{\s',\s}p_t(x,\s')  -    \o^x_{\s,
\s '} p_t(x,\s) \bigr] \,.
\end{multline}
Above, $p_t(x, \s)$ denotes the probability of the motor to be  at
site $x \d$ ($\d$ being the length of spatial unity in the
discretization)
 at time $t$ in chemical state $\s$.   $\o_{x,x'}^\s$ is the probability rate for a jump from the position $x\d $ to
  $x'\d $ (with $|x-x'|=1$)
 if the chemical  state is  $\s$ , and $\o_{\s,\s'}^x$ is the probability rate for a chemical transition from $\s$ to $\s'$
 if the mechanical coordinate  is $x$.
 When considering motor proteins
moving along polymeric filaments, the above rates are periodic
function in $x$, with the same period.

In order to keep information about the ATP consumption, it is
convenient to enrich the above discrete model, extending the state space and
distinguishing between active and passive chemical transitions (i.e.
related to hydrolysis/synthesis of ATP or  to thermal noise) \cite{PJAP}. The
new state of the system (motor plus environment) is now described by
the triple $(x, \s, z ) \in \bbZ\times \G\times \bbZ$, where $z$
denotes the algebraic number of hydrolyzed ATP molecules (in the
sense that $z$ increases of one unity for each ATP hydrolysis and
decreases of one unity for each ATP synthesis). We write $\o^{x,
l}_{\s,\s'}$ for the probability rate of a chemical transition from
$\s$ to $\s'$ at the  mechanical state $x$ with the consumption of
$l=-1, 0, 1$ ATP molecules. Hence, $\o^{x, l}_{\s,\s'}$ is  the
probability rate for the jump $(x, \s, z)\to (x, \s', z+l)$. The
other possible jumps are the mechanical ones $ (x, \s, z)\to (x',
\s, z)$, $|x-x'|=1$, having probability rate $\o^{\s}_{x,x'}$. The
resulting master equation is the following:
\begin{multline}
\partial _t p_t(x,\s,z)  =
\o^\s _{x-1,x} p_t (x-1, \s,z) + \o ^\s_{x+1,x}  p_t (x+1, \s,z)
 - \bigl[ \o^\s _{x,x-1}+\o ^\s _{x,x+1}\bigr ] p_t(x,\s,z)\\
+ \sum_{l=-1,0,1} \sum _{\s'\not=\s}\bigl[ \o^{x,l} _{\s',\s}p_t(x,\s',z-l)  -    \o^{x,l}_{\s,
\s '} p_t(x,\s,z) \bigr] \,.
\label{master}
\end{multline}
In what follows, we suppose that all the rates are positive. In principle, this is not a restriction since positive but very
small rates correspond to very unlikely transitions.

\smallskip
 Recall that $f$ denotes the load force and $\D \mu$ the chemical potential difference associated to ATP hydrolysis.
We write $V_\s$ for the potential energy of the molecular motor in
the chemical state $\s$ due to the interaction with the polymer.
Again, the functions $V_\s$, $\s\in \G$, are periodic functions of
the same period.
 The energy associated to the state $(x, \s, z)$  is given by
\begin{equation} E(x,\s, z)= V_\s (x\d) - fx\d - z \D \mu + \text{ const. }
\end{equation}
 Then,   the detailed balance condition reads
\begin{equation}\label{dbc}
\begin{cases}
\o _{x,x+1}^\s\,/\, \o _{x+1,x}^\s  = \exp \left\{ - \b   \bigl[ V_\s (x\d+\d)- V_\s (x\d) \bigr] +\b \d f \right\}\,,\\
 \o_{\s, \s'} ^{x, l}\,/\,  \o _{\s', \s}^{x, -l}  = \exp \left \{ - \b\bigl[ V_{\s'} (x\d)- V_\s (x\d) \bigr]+ \b l \D \mu
 \right\}\,.
\end{cases}
\end{equation}

We can now formulate our result:
\begin{fact}\label{applicazione}
Recall that  $x_t$, $z_t$ denote  the position at time $t$ of the
molecular motor and the total  ATP consumption up to  time $t$.
Denote by $\overline{f\b}$  and $\bar x_t$  the adimensional
quantities $f \b \times [1m]$ and $x_t /[1m]$ ($[1m]$ being the
length unit).
 Then for any given initial configuration
$(x_0, z_0=0, \s_0)$ and for any constants
 $\l, \g \in \bbR$
the following limit exists:
\begin{equation}\label{roma1}
 \vartheta ( \l, \g) := \lim _{t \to \infty} - \frac{1}{t} \ln \bbE _{(x_0, z_0=0, \s_0)} \Big[ e^{- \l \bar x_t- \g z_t } \Big]\,.
\end{equation}
Moreover, it holds
\begin{equation}\label{meta}
 \vartheta ( \l, \g) = \vartheta (\overline{f\b} - \l, \b\D\mu -\g)\,, \qquad \l , \g \in
 \bbR\,.
 \end{equation}
\end{fact}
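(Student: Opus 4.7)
The plan is to reduce Fact \ref{applicazione} to Theorem \ref{basis} by projecting the infinite Markov chain on $\bbZ\times\G\times\bbZ$ onto a finite one via spatial periodicity, while tracking $x_t$ and $z_t$ as generalized currents on the quotient graph. Let $L\in\bbN$ be the common period of the rates $\o^\s_{x,x\pm 1}$ and $\o^{x,l}_{\s,\s'}$ in $x$, and consider the continuous--time Markov chain $Y_t=(\pi(x_t),\s_t)$ on the finite state space $\cS=(\bbZ/L\bbZ)\times\G$, $\pi$ being the canonical projection. The chain lives on a finite connected graph $\cG$ with multiple edges: between $(x,\s)$ and $(x,\s')$ one has one edge per value of $l\in\{-1,0,1\}$; it is irreducible since all rates are strictly positive. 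Orient every mechanical edge in the direction of increasing $x$ and every chemical edge arbitrarily, recording its ATP balance $l(\ell)\in\{-1,0,1\}$: each mechanical traversal contributes $\pm\d$ to $x_t-x_0$ and each chemical traversal of label $l$ contributes $\pm l(\ell)$ to $z_t$, so
\[ x_t-x_0=\d\sum_{\ell\text{ mech}}N_\ell(t)\,,\qquad z_t=\sum_{\ell\text{ chem}}l(\ell)\,N_\ell(t)\,. \]

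Fix a basis $\cC_1,\dots,\cC_s$ of oriented cycles on $\cG$ together with paths $\{\g_{\s,\s'}\}$. Combining the above identities with \eqref{ciclico} and the decomposition $\cC_t=\sum_i a_i(t)\cC_i$, one obtains
\[ \bar x_t=\sum_{i=1}^s a_i(t)\bar X_i+O(1)\,,\qquad z_t=\sum_{i=1}^s a_i(t)Z_i+O(1)\,, \]
where $\bar X_i:=\d N(\cC_i)/[1m]$ and $Z_i:=Z(\cC_i)$ are respectively the normalized net $x$--displacement and the net ATP consumption along $\cC_i$ (with $N(\cC_i):=\sum_{\ell\text{ mech}}S_\ell(\cC_i)$ and $Z(\cC_i):=\sum_{\ell\text{ chem}}l(\ell)\,S_\ell(\cC_i)$), and the errors $O(1)$ are uniformly bounded in $t$ and in the realization. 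Setting $\l_i:=\l\bar X_i+\g Z_i$, boundedness of $e^{O(1)}$ allows one to apply Theorem \ref{basis} to $Y_t$ and deduce that the limit in \eqref{roma1} exists, is independent of the initial datum, and equals $Q_b(\l_1,\dots,\l_s)$.

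It remains to compute the cycle affinities. Using detailed balance \eqref{dbc}, the weight of a mechanical oriented edge $(x,\s)\to(x+1,\s)$ equals $-\b[V_\s((x+1)\d)-V_\s(x\d)]+\b\d f$, and that of a chemical edge $(x,\s)\to(x,\s')$ of label $l$ equals $-\b[V_{\s'}(x\d)-V_\s(x\d)]+\b l\D\mu$. Summing along any lift of $\cC_i$ to the universal cover $\bbZ\times\G$, the $V$--differences form a telescoping sum; since the lifted path ends at a point of the form $(x_0+kL,\s_0)$ and $V_\s$ is $L\d$--periodic, this sum vanishes, yielding
\[ \cA(\cC_i)=\b\d f\,N(\cC_i)+\b\D\mu\,Z(\cC_i)=\overline{f\b}\,\bar X_i+\b\D\mu\,Z_i\,. \]
Therefore $\cA(\cC_i)-\l_i=(\overline{f\b}-\l)\bar X_i+(\b\D\mu-\g)Z_i$, and the symmetry \eqref{uffa2} in Theorem \ref{basis} rewrites precisely as \eqref{meta}. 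The main obstacle is purely bookkeeping: one must verify that $\bar x_t$ and $z_t$ are controlled by the generalized currents up to a uniformly bounded error, and that the potential contributions to $\cA(\cC_i)$ telescope on the quotient despite the cycle being closed only modulo $L\d$; once these points are settled, the affinity identification is a direct consequence of \eqref{dbc} and periodicity.
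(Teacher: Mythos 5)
Your proof is correct and follows the paper's overall strategy (projection onto the finite periodic quotient, then Theorem~\ref{basis}), but it takes a genuinely more economical route for the affinity computation. The paper constructs an explicit basis in Lemma~\ref{raidue}, verifies linear independence, and then computes the affinity of each cycle in that particular family one type at a time. You instead prove the \emph{universal} identity
\begin{equation*}
\cA(\cC)=\b\d f\,N(\cC)+\b\D\mu\,Z(\cC)\qquad\text{for every oriented cycle }\cC,
\end{equation*}
by lifting the cycle to $\bbZ\times\G$, observing that the $V$--differences telescope, and using that the lifted endpoint $(x_0+kN,\s_0)$ gives a vanishing total by $N\d$--periodicity. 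Because of this, you can work with a completely arbitrary basis (any fundamental set of oriented cycles will do), and the symmetry \eqref{uffa2} transcribes to \eqref{meta} automatically from the linearity of $\cA$ and of the substitution $\l_i=\l\bar X_i+\g Z_i$. What you gain is that the technical Lemma~\ref{raidue} is not needed at all; what you lose relative to the paper is only the illustrative payoff of exhibiting a basis that is \emph{not} a fundamental set, which the authors use as a worked example of the greater flexibility of Theorem~\ref{basis} over Fact~\ref{robots}. One small stylistic point: early on you describe the graph as having ``one edge per value of $l$'' between $(x,\s)$ and $(x,\s')$, which omits the mechanical edges; your later bookkeeping makes clear you do include them, but the initial description of $\cG$ should be complete (one mechanical edge per chemical state $\s$ between $x$ and $x+1$, plus three chemical edges per unordered pair $\{\s,\s'\}$ at each $x$, matching the chord number \eqref{numerocorde}).
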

 We point out that the model presented in
\cite{KW,LLM0,LLM1} is indeed a coarse--graining approximation of
the model we have described above when considering two chemical
states and suitable potentials.
\subsection{Proof of Fact \ref{applicazione}}
 In order to apply Theorem  \ref{basis} we need to
work with a finite state space. This is obtained by a suitable
projection.

Recall that the filament is periodic. We define $N$ as the number of
$\d$--units contained in a spatial period. This implies in
particular
\begin{equation}\label{periodico}
\begin{cases}
V_\s (x\d)= V_\s (x\d+N\d)\,,
&\o ^{x,l}_{\s, \s'}= \o ^{x+N,l}_{\s,\s'} \,, \\
\o _{x+1, x}^\s= \o _{x+N+1, x+N}^\s \,,& \o _{x,x+1}^\s = \o_{x+N,
x+N+1}^\s \,.
\end{cases}
\end{equation}
We write $\bbZ_N$ for the quotient space $\bbZ/ N \bbZ$ given by integers modulo $N$. Of course, sums in $\bbZ_N$ are  modulo $N$.
Due to \eqref{periodico}
both the potentials and the jump rates can be thought of with the spacial parameter $x$ varying in $\bbZ_N$.


We introduce a finite graph $\cG$ with multiple edges and no loop, indicating the canonical orientation of edges.
 To this aim it is convenient
to label the states in $\G$ as $\s_1, \s_2, \dots, \s_m$ ($m = |\G|$). Then,
 the  vertexes  of $\cG$ are  given by the pairs $(x, \s_i)$, $x \in \bbZ_N$ and
$1\leq i \leq m $. We put an oriented edge $e_{x,x+1}^{\s_i} $
 from state $(x,\s_i)$ to  state $(x+1,\s_i)$.  Moreover,  given $1\leq i <j \leq m$, we put three oriented  edges
from state  $(x,\s_i)$ to state  $(x, \s_j)$ labeled by the index
$l=-1, 0,1$ and we call them $e^{x,l}_{\s_i, \s_j}$. See figure
\ref{lati}. Note that $\cG$ has $N  m $ vertexes  and $Nm+ 3N m
(m-1)/2$ edges. In particular, its chord number  is
\begin{equation}\label{numerocorde}
s= 3N m (m-1)/ 2 +1\,.
\end{equation}

\begin{figure}[!ht]
    \begin{center}
       \psfrag{a}{$(x,\s_i)$}
      \psfrag{b}{$(x+1,\s_i)$}
       \psfrag{c}{$e^{\s_i}_{x,x+1}$}
    \psfrag{d}{$e^{x,-1}_{\s_i, \s_j}$}
    \psfrag{e}{$e^{x,0}_{\s_i, \s_j}$}
   \psfrag{f}{$ e^{x,1}_{\s_i, \s_j}$}
     \psfrag{g}{$(x, \s_j)$}
 \includegraphics[width=7cm]{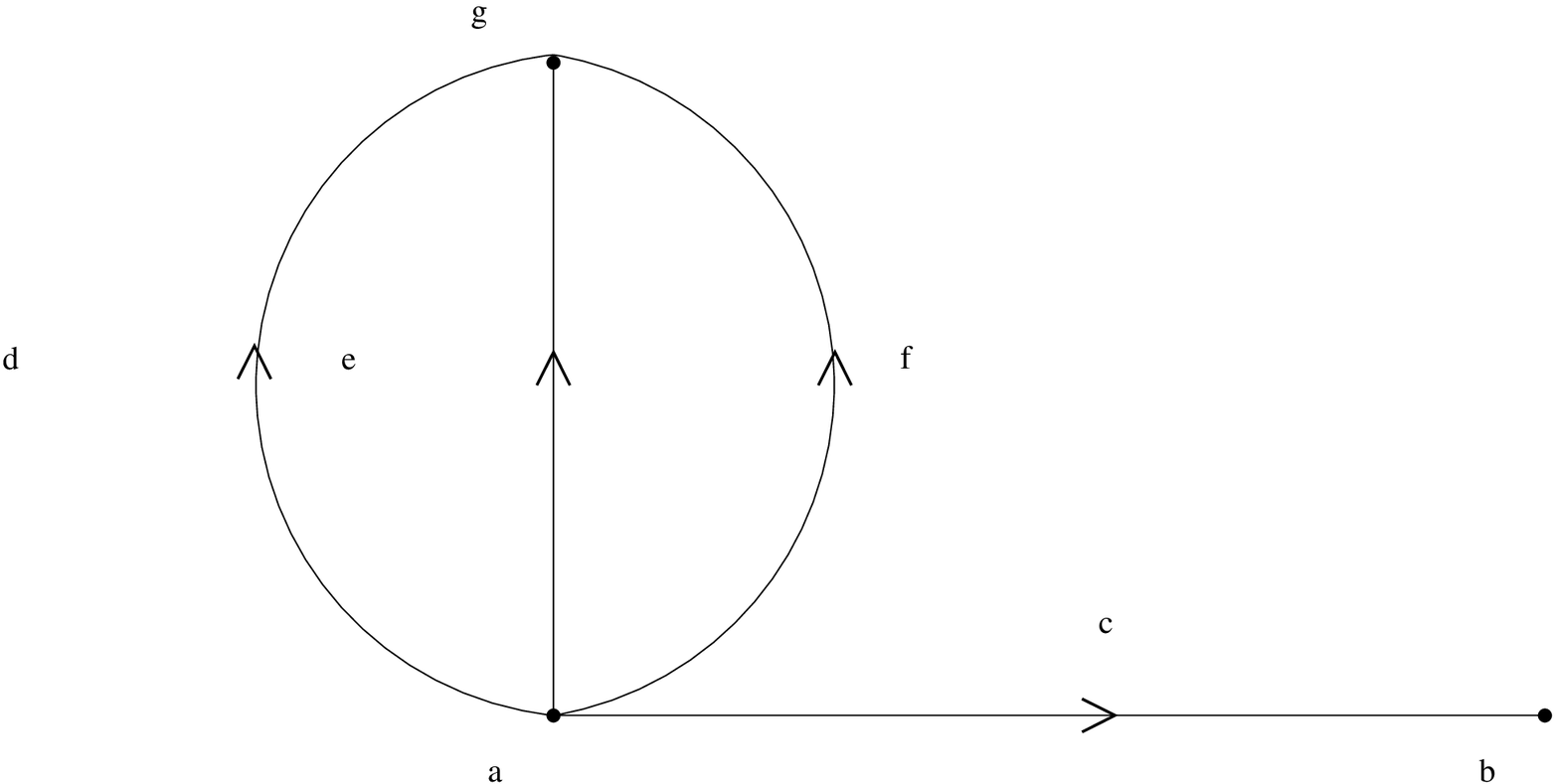}
     \caption{The edges $e_{x,x+1}^{\s_i}$, $\;e^{x,-1}_{\s_i, \s_j}$, $\;e^{x,0}_{\s_i, \s_j}$, $\;e^{x,1}_{\s_i, \s_j}$.}
    \label{lati}
    \end{center}
  \end{figure}

\smallskip

 We now consider the continuous time   random walk  $X_t$ on the graph $\cG$, where the probability rate
for a jump along the  edge $e_{x,x+1}^{\s_i} $ equals
$\o_{x,x+1}^{\s_i} $, while the probability rate of the reversed
jump is $\o_{x+1,x}^{\s_i} $. Similarly, given $1\leq i < j \leq m$
and $l=-1,0,1$, the probability rate for a jump along the edge
$e^{x,l}_{\s_i, \s_j}$ equals $\o^{x,l}_{\s_i, \s_j}$, while the
probability rate of the reversed jump is $\o^{x,-l}_{\s_j, \s_i}$.
Recall that, given an oriented  edge $\ell$, we write $N_\ell(t)$
for the number of times the random walk jumps along the edge $\ell$
minus the number of times the  random walk jumps along the reversed
edge $\bar \ell$, up to time $t$. Trivially, the above random walk
$X_t$ can be coupled with the Markov chain considered in the
previous subsection getting the identities
\begin{equation*}
 x_t/\d = \sum _{x\in \bbZ_N} \sum _{1\leq i \leq m}  N_{e_{x,x+1}^{\s_i}}(t)\,,\qquad \qquad
 z_t = \sum _{x\in \bbZ_N} \sum _{1\leq i <j \leq m}\Big[ N_{  e ^{x, 1}_{\s_i, \s_j} }(t)  - N_{e ^{x, -1}_{\s_i, \s_j} }(t)\Big]\,.
\end{equation*}
This fact, together with \eqref{fr_nonna_q}, implies that the limit \eqref{roma1} exists and is finite, hence the function
$\vartheta (\l, \g)$ is well defined.

\smallskip

It remains to prove \eqref{meta}.
 To this aim recall how we have associated the cycle $\cC_t$ to the  trajectory of the random walk  up to time $t$
 (see Definition \ref{def_Ct}). Trivially, a part an error
of order one uniformly in $t \in [0,\infty)$, the above identities
can be rewritten as
\begin{equation}\label{codroipo}
\begin{cases}
  x_t /\d= \sum _{x\in \bbZ_N} \sum _{1\leq i \leq m}  S_{e_{x,x+1}^{\s_i}}(\cC_t)+O(1) \,,\,\\
  z_t = \sum _{x\in \bbZ_N} \sum _{1\leq i <j \leq m}\bigl[ S_{  e ^{x, 1}_{\s_i, \s_j} }(\cC_t)  - S_{e ^{x, -1}_{\s_i, \s_j} }(\cC_t)\bigr]+O(1)\,.
  \end{cases}
\end{equation}
We give now two different applications of Theorem \ref{basis}. In
the first case  we exhibit a nice basis which is not a fundamental
set, in the second case we exhibit a spanning tree leading to a nice
fundamental set.

\subsubsection{A nice basis} We  introduce a basis of oriented cycles on $\cG$.
We call $\cC_0$ the cycle described by the ordered family of
oriented edges
 $$\cC_0=\bigl( e_{1,2}^{\s_1}, e_{2,3}^{\s_1}, \dots, e_{N-1, N}^{\s_1}, e_{N,1}^{\s_1}\bigr)\,.$$
Given $x\in \bbZ_N$ and $1\leq i <j \leq m$, we introduce the three  oriented cycles (see figure \ref{cicli_facce})
$$\cC_{\s_i, \s_j} ^{x, -1}= \bigl( e_{\s_i, \s_j}^{x,0} , \bar e_{\s_i, \s_j}^{x,-1}\bigr)
\,,\qquad \cC_{\s_i, \s_j} ^{x, 1}= \bigl( e_{\s_i, \s_j}^{x,1} , \bar e_{\s_i, \s_j}^{x,0}\bigr)\,,
$$
$$ \cC^{x,x+1}_{\s_i, \s_j}= \bigl( e _{x,x+1}^{\s_i}, e_{\s_i, \s_j}^{x+1,0}, \bar e _{x,x+1}^{\s_j}, \bar e_{\s_i, \s_j}^{x,0} \bigr)\,.
$$

\begin{figure}[!ht]
    \begin{center}
        \psfrag{a}{$ \cC_{\s_i, \s_j} ^{x, -1} $}
      \psfrag{b}{$ \cC_{\s_i, \s_j} ^{x, 1} $}
         \psfrag{c}{$\cC^{x,x+1}_{\s_i, \s_j}$}
         \psfrag{d}{$(x,\s_i)$}
     \psfrag{e}{$(x+1,\s_i)$}
      \psfrag{f}{$(x+1, \s_j) $}
       \psfrag{g}{$(x, \s_j)$}
 \includegraphics[width=10cm]{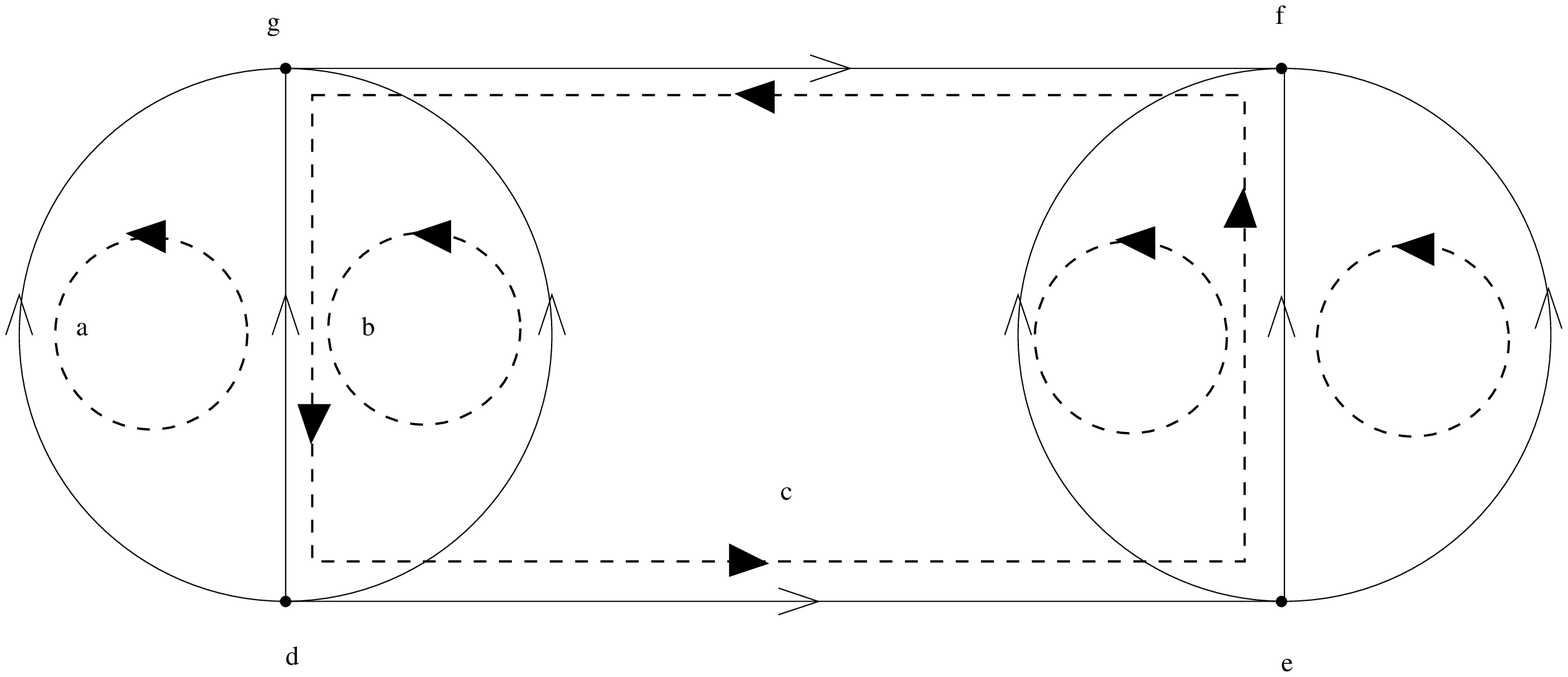}
      \caption{The cycles    $\; \cC_{\s_i, \s_j} ^{x, -1}$, $\; \cC_{\s_i, \s_j} ^{x, 1}$, $\; \cC^{x,x+1}_{\s_i, \s_j}$.}
    \label{cicli_facce}
    \end{center}
  \end{figure}
Consider now, for each $x$, the complete graph on the set $\{(x,
\s_i)\,:\, 1\leq i \leq m \}$ with canonically oriented edges
$e^{x,0}_{\s_i,\s_j}$, $1\leq i< j\leq m$ (see figure \ref{acqua}).
Trivially, the edges $e^{x,0}_{\s_1,\s_2}, e^{x,0}_{\s_2,\s_3},
\dots, e^{x,0}_{\s_{m-1},\s_m}$ form a spanning tree. We call
$\cC^{x,0}_{\s_i,\s_j}$ the oriented cycle associated to the chord
$e^{x,0}_{\s_i, \s_j}$, with $(i,j)\not \in\{(1,2), (2,3),\dots,
(m-1,m)\}$. The orientation of $\cC^{x,0}_{\s_i,\s_j}$ agrees with
the one of the chord $e^{x,0}_{\s_i, \s_j}$.

\begin{figure}[!ht]
    \begin{center}
        \psfrag{a}{$ (x,\s_1) $}
      \psfrag{b}{$  (x,\s_2)$}
         \psfrag{c}{$(x,\s_3)$}
         \psfrag{d}{$(x,\s_4)$}
 \includegraphics[width=3cm]{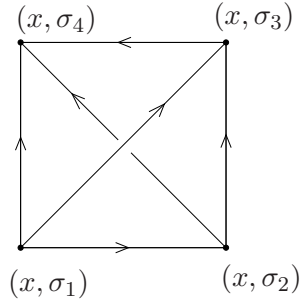}
      \caption{The complete graph with edges $e^{x,0}_{\s_i,\s_j}$, $1\leq i< j\leq
m:=4$.}
    \label{acqua}
    \end{center}
  \end{figure}
\begin{lemma}\label{raidue}
Consider the family of oriented cycles \begin{itemize}
\item[(i)]
$\cC_0$,
\item[(ii)]  $\cC_{\s_i, \s_j} ^{x, - 1}$ with $x\in \bbZ_N$ and $1\leq
i<j\leq m$,
\item[(iii)] $\cC_{\s_i, \s_j} ^{x,  1}$ with $x\in \bbZ_N$ and $1\leq
i<j\leq m$,
\item[(iv)]
$\cC_{\s_i, \s_j} ^{x,  0}$ with $x\in \bbZ_N$, $1\leq i< j \leq m$
and $(i,j)\not\in \{(1,2),(2,3),\dots,(m-1,m) \}$,

\item[(v)] $\cC^{x,x+1}_{\s_i, \s_{i+1}}$ with $x\in \bbZ_N$ and
$1\leq i \leq m-1$.
\end{itemize}
Then  the above family of oriented cycles is a basis.
\end{lemma}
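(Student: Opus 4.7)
The plan is to show that the cardinality of the listed family equals the chord number $s$ of \eqref{numerocorde} and then verify linear independence; by Lemma~\ref{semplice} this will prove the family is a basis. A direct count gives
$$1+N\tfrac{m(m-1)}{2}+N\tfrac{m(m-1)}{2}+N\tfrac{(m-1)(m-2)}{2}+N(m-1)=1+\tfrac{3Nm(m-1)}{2}=s,$$
so only independence remains.

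Suppose a formal combination
$$a_0\cC_0+\sum_{x,i<j}b^{x,-1}_{ij}\cC^{x,-1}_{\s_i,\s_j}+\sum_{x,i<j}b^{x,1}_{ij}\cC^{x,1}_{\s_i,\s_j}+\sum_{x,(i,j)\text{ n.c.}}b^{x,0}_{ij}\cC^{x,0}_{\s_i,\s_j}+\sum_{x,i=1}^{m-1}c^x_i\cC^{x,x+1}_{\s_i,\s_{i+1}}=\emptyset,$$
where ``n.c.'' means $(i,j)\notin\{(1,2),\dots,(m-1,m)\}$. I would extract the coefficients by applying the functionals $S_\ell$ of Definition~\ref{quadroverde} to carefully chosen edges $\ell$, peeling off one layer of the family at a time.

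\textbf{Step 1.} Take $\ell=e^{x,\pm 1}_{\s_i,\s_j}$. By inspection of the five defining strings, this edge appears only in $\cC^{x,\pm 1}_{\s_i,\s_j}$ (all other cycles use either mechanical edges, $l=0$ chemical edges, or $l=\mp 1$ chemical edges at different $(x,i,j)$). Hence $b^{x,\pm 1}_{ij}=0$.

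\textbf{Step 2.} Take $\ell=e^{x,0}_{\s_i,\s_j}$ with $(i,j)$ non-consecutive. By the choice of spanning tree in the complete graph on $\{(x,\s_k)\}_k$, this chord lies in $\cC^{x,0}_{\s_i,\s_j}$ alone among the group (iv) cycles, while the group (v) cycles only involve consecutive $e^{y,0}$-edges and $\cC_0$ involves none. Thus $b^{x,0}_{ij}=0$.

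\textbf{Step 3.} Take $\ell=e^{x,0}_{\s_i,\s_{i+1}}$ (consecutive). Reading off $\cC^{x,x+1}_{\s_i,\s_{i+1}}=(e^{\s_i}_{x,x+1},e^{x+1,0}_{\s_i,\s_{i+1}},\bar e^{\s_{i+1}}_{x,x+1},\bar e^{x,0}_{\s_i,\s_{i+1}})$, this edge occurs in $\cC^{x-1,x}_{\s_i,\s_{i+1}}$ with $S=+1$ and in $\cC^{x,x+1}_{\s_i,\s_{i+1}}$ with $S=-1$, and nowhere else among the survivors. Hence $c^{x-1}_i=c^x_i$ for every $x\in\bbZ_N$, so $c^x_i$ is independent of $x$; write $c_i:=c^x_i$.

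\textbf{Step 4.} Take $\ell=e^{\s_i}_{x,x+1}$. This edge sits in $\cC^{x,x+1}_{\s_i,\s_{i+1}}$ with $S=+1$ (if $i\le m-1$), in $\cC^{x,x+1}_{\s_{i-1},\s_i}$ with $S=-1$ (if $i\ge 2$), and in $\cC_0$ with $S=+1$ exactly when $i=1$. This yields, for $i=1,\dots,m$,
$$a_0\,\bbI_{i=1}+c_i\,\bbI_{i\le m-1}-c_{i-1}\,\bbI_{i\ge 2}=0.$$
Taking $i=m$ forces $c_{m-1}=0$; descending induction then gives $c_{m-2}=\cdots=c_1=0$, and finally $i=1$ gives $a_0=0$.

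All coefficients vanish, so the family is independent, hence (by the cardinality count and Lemma~\ref{semplice}) a basis. The main obstacle is the bookkeeping in Steps 3--4: one must check at each stage that no remaining member of the family contributes to the chosen edge, and handle the boundary indices $i=1$ and $i=m$ where the telescoping recurrence terminates; the disjointness of the three edge types $e^{\s_i}_{x,x+1}$, $e^{x,0}_{\s_i,\s_j}$, $e^{x,\pm 1}_{\s_i,\s_j}$ makes this routine once the order of Steps 1--4 is fixed.
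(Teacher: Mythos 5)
Your proof is correct and follows essentially the same strategy as the paper's: count the cardinality, invoke Lemma~\ref{semplice}(v), peel off groups (ii), (iii), (iv) using the uniquely occurring edges $e^{x,\pm1}_{\s_i,\s_j}$ and the non-consecutive $e^{x,0}_{\s_i,\s_j}$, then finish by applying $S_\ell$ to the mechanical edges. One small remark: your Step~3 is superfluous. The equations from Step~4, taken for each fixed $x$ separately, already form a telescoping chain $c^x_{m-1}=0$, $c^x_{m-2}=c^x_{m-1}$, $\dots$, $a_0=-c^x_1$, which forces all $c^x_i$ and $a_0$ to vanish without first establishing that $c^x_i$ is independent of $x$; this is exactly how the paper proceeds.
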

\begin{proof}
The oriented cycles of type (i),(ii),...,(v) are respectively $1$,
$Nm(m-1)/2$, $Nm(m-1)/2$, $Nm(m-1)/2-N(m-1)$, $N(m-1)$. It follows
then that the above family of oriented cycles has cardinality
$3Nm(m-1)/2+1$, which is the chord number $s$ given  in
\eqref{numerocorde}.

 By point (v) in Lemma
\ref{semplice}, in order to show  that it is a basis it is enough to
prove independence. Suppose to have a linear combination of the
above oriented cycles which equals the degenerate cycle $\emptyset$.
First we observe that the edge $\bar e^{x,-1}_{\s_i,\s_{j}}$ belongs
only to $\cC^{x,-1}_{\s_i,\s_{j} }$, the reversed edge does not
belong to any oriented cycle in the family. By applying the operator
$S_\ell$ with $\ell=\bar e^{x,-1}_{\s_i,\s_{j}}$, and then varying
$x,i,j$, we get that the cycles of type (ii) do not appear in the
linear combination. The same holds for cycles of type (iii) (think
to the edges $e^{x,1}_{\s_i,\s_{j}}$). Keeping in mind this
consideration, the edge $e^{x,0}_{\s_i,\s_{j}}$ with  $x\in \bbZ_N$,
$1\leq i<j \leq  m$ and $(i,j)\not\in \{(1,2),(2,3),\dots,(m-1,m)
\}$, appears only in the oriented cycle $\cC^{x,0}_{\s_i,\s_{j} }$
of type (iv) while its reversed edge does not appear in any cycle of
the family (recall the properties of a fundamental set of oriented
cycles). Note that edges of the forms $e^{x,0}_{\s_i,\s_{j}}$ or
$\bar e^{x,0}_{\s_i,\s_{j}}$ appear in the oriented cycles of type
(v) but it must be $j=i+1$.

At this point, we know that the linear combination involves only
$\cC_0$ and the oriented cycles of type (v), i.e. we have
\begin{equation}\label{militare}
\a_0 \cC_0+ \sum _{x \in \bbZ_N} \sum _{1\leq i <  m}
\a^{x,x+1}_{\s_i, \s_{i+1}} \cC^{x,x+1}_{\s_i,
\s_{i+1}}=\emptyset\,.
\end{equation}
Applying to both members  the operator $S_\ell$ with $\ell
=e_{x,x+1}^{\s_i}$ ($x \in \bbZ_N$ and $1\leq i \leq m$), we get for
any $x\in \bbZ_N$
\begin{equation} \begin{cases} \a_0+
\a^{x,x+1}_{\s_1,\s_{2}}=0 & \text{  }\\
\a^{x,x+1}_{\s_i,\s_{i+1}}=\a^{x,x+1}_{\s_{i+1},\s_{i+2}} & \text{
if } 1\leq i \leq m-2 \,,\\
\a^{x,x+1}_{\s_{m-1},\s_{m}}=0\,.
\end{cases}
\end{equation}
The above system trivially implies that all coefficients in
\eqref{militare} are zero. Hence, the oriented cycles in the above
family are independent.
\end{proof}
We point out that the above basis is not a fundamental set of
oriented cycles. Indeed, each edge of $\cC_0$ belongs also to some
oriented cycle of type (v), hence $\cC_0$ can not contain any chord.

Let us now compute the affinities associated to the  oriented cycles
in our basis.  Due to the detailed balance relations \eqref{dbc}, we
have the following weights:
\begin{equation}
\begin{cases}
w\bigl( e^\s_{x,x+1}\bigr)= -  \b \bigl[ V_\s (x\d+\d)- V_\s (x\d) \bigr] + \b\d f \,,\\
w\bigl( e^{x,l}_{\s,\s'} \bigr)=
  - \b\bigl[ V_{\s'} (x\d)- V_\s (x\d) \bigr]+\b l \D \mu\,.
\end{cases}
\end{equation}
 Hence,
\begin{align}
&  \cA \bigl( \cC_0\bigr)= \b Nf \d\,, \qquad \cA\bigl(\cC_{\s_i,
\s_j} ^{x, -1}\bigr)=  \b \D \mu\,, \qquad \cA \bigl(\cC_{\s_i,
\s_j} ^{x, 1}
\bigr)=  \b \D \mu\,, \\
 & \qquad \qquad \qquad \cA\bigl(\cC_{\s_i, \s_j}
^{x, 0}\bigr)= 0\,, \qquad  \cA \bigl( \cC^{x,x+1}_{\s_i,
\s_j}\bigr)=0 \nonumber
\end{align}
(note that conservative force fields never appear in the cycle
affinities). We write \begin{multline}
\cC_t= a_0(t) \cC_0+ \sum_{x
\in \bbZ_N} \sum _{1\leq i <j \leq m} \Big (
a_{\s_i, \s_j} ^{x, -1} (t) \cC_{\s_i, \s_j} ^{x, -1}
+ a_{\s_i, \s_j} ^{x, 1} (t)   \cC_{\s_i, \s_j} ^{x, 1}\Big)+\\
\sum_{x\in \bbZ_N} \sum_{\substack{1\leq i <j \leq m\\ (i,j) \not
\in \{ (1,2), \dots,(m-1,m)\} } }a_{\s_i, \s_j} ^{x, 0} (t)
\cC_{\s_i, \s_j} ^{x, 0}
+ \sum_{x \in \bbZ_N}\sum _{1\leq i <m } a^{x,x+1}_{\s_i, \s_{i+1}}
(t) \cC^{x,x+1}_{\s_i, \s_{i+1}}\,.
\end{multline}
By applying \eqref{codroipo} to the above decomposition and setting
$\bar \d= \d/[1m]$, we get that
\begin{equation}
 \l \bar x_t   + \g z_t =\l N \bar \d a_0(t) + \g \sum _{x \in \bbZ_N} \sum _{1\leq i < j \leq m} \Big[
a_{\s_i, \s_j} ^{x, -1} (t) +a_{\s_i, \s_j} ^{x, 1} (t)
\Big]+O(1)\,. \end{equation}
The above identity together with Theorem \ref{basis} implies that
\begin{multline}
 \vartheta (\l, \g)=
 Q_b( \cC_0 \to \l N \bar \d \,,\; \cC_{\s_i, \s_j} ^{x, -1} \to \g\,,\;
\cC_{\s_i, \s_j} ^{x, 1} \to \g\,, \; \cC_{\s_i, \s_j}
^{x, 0}\to 0\,,\; C^{x,x+1}_{\s_i, \s_j} \to 0)=\\
Q_b( \cC_0 \to \b  N f\d-\l N\bar \d \,,\; \cC_{\s_i, \s_j} ^{x, -1}
\to \b \D \mu -\g\,,\; \cC_{\s_i, \s_j} ^{x, 1} \to \b \D \mu-\g\,,
\;
\cC_{\s_i, \s_j} ^{x, 0}\to 0\,,\; C^{x,x+1}_{\s_i, \s_j} \to 0)=\\
\vartheta ( \overline{\b f}-\g, \b \D \mu-\g)\,.
\end{multline}
This concludes the proof of Fact \ref{meta}.

\subsubsection{A nice fundamental set} The above arguments can be
applied  also to  fundamental sets of oriented cycles. As example of
spanning tree, one can take the tree given by the edges
$e^{x,0}_{\s_1,\s_i}$ with $2\leq i \leq m$ and $x\in \bbZ_N$, and
$e^{\s_1}_{x,x+1}$ with $1\leq x\leq N-1$ (see figure
\ref{figurina1}). Alternatively, one can take as spanning tree the
one given by the edges $e^{1,0}_{\s_i,\s_j}$ with $(i,j)\in \{(1,2),
(2,3), \dots, (m-1,m)\}$, and $e^{\s_i}_{x,x+1}$ with $1\leq i \leq
m$ and $1\leq x \leq N-1$. In both cases, the associated fundamental
set is tractable.  The advantage here is that one does not have to
check that the associated fundamental set is a basis, since this is
automatically. We leave the details to the interested reader.

\begin{figure}[!ht]
    \begin{center}
        \psfrag{a}{$ (1,\s_1) $}
      \psfrag{b}{$  (1,\s_2)$}
         \psfrag{c}{$(1,\s_3)$}
         \psfrag{d}{$(1,\s_4)$}
  \psfrag{e}{$x=1$}
\psfrag{f}{$x=2$} \psfrag{g}{$x=3$} \psfrag{h}{$x=1$}
 \includegraphics[width=12cm]{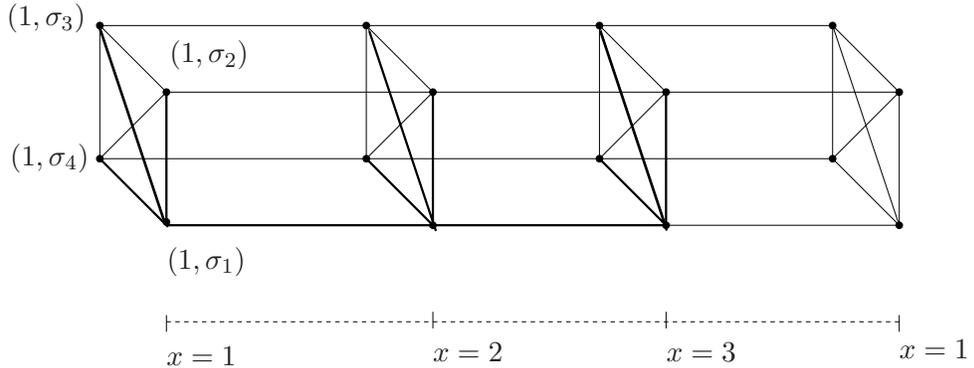}
      \caption{Example of spanning tree where $N=3$, $m=4$. The spanning tree is in boldface,
      edges $e^{x,\pm 1  }_{\s_i,\s_j}$ have been omitted.}
    \label{figurina1}
    \end{center}
  \end{figure}

\section{Mathematical integration to Section  \ref{pierpi}}\label{integ_pierpi}

 The  decomposition of the generator $\cL$ in terms of the rates $k^{(i)}(\s,\s')$ described in Section \ref{pierpi}
 is naturally
associated to the following construction of the Markov chain $X_t$.
First we observe that  the evolution of the Markov chain is
univocally determined by the sequence   $\s_0 , \s_1, \s_2,\dots$ of
the visited states (listed in chronological order) and by the times
$T_n $, $n \in \bbN$, of  the $n$--th jump  (with the convention
$T_0=0$). In order to specify these last  quantities, we consider a
family of independent exponential random variables
$\cT^{(i,n)}_{\s,\s'}$, parameterized by the integers
  $n \in \bbN$ and $i:1 \leq i \leq m$ and by the  ordered pairs
  $(\s, \s')$  of distinct elements in $\cS$. The random variables
$\cT^{(i,n)}_{\s,\s'}$ are all defined on a same probability space $(\O, \cF, \bbP)$,
  $\cT^{(i,n)}_{\s, \s'}$ is an exponential variable with  parameter (i.e. $1/\text{mean}$)
  $k^{(i)}(\s,\s')$. Note that $\cT^{(i,n)}_{\s,\s'}=\infty$ if
  $k^{(i)}_{\s,\s'}=0$.

\smallskip

 The  Markov chain $X_t$ starting in $\s_0$
  can be inductively constructed  as follows as function on the space $\O$: known the $n$--th jump time $T_n$ and
   the $n$--th visited state
  $\s_n$, we set $\cT^{(n)}= \min _{i, \s'} \cT^{(i,n)}_{\s_n, \s'}$.
  Then $T_{n+1}:= T_n + \cT^{(n)}$, i.e. after the $n$--th jump the Markov chain remains at
  $\s_n$ for a time $\cT^{(n)}$, after that it jumps to the state
  $\s'$ such that for some  $i$ it holds  $\cT^{(n)}=   \cT^{(i,n)}_{\s_n, \s'}$.
  As well known (cf.
 \cite{N}[Theorem 2.3.3]), the index $i$ and the state $\s'$ such that
 $\cT^{(n)}=   \cT^{(i,n)}_{\s_n, \s'}$ are univocally determined
 a.s.



\smallskip

Recall the interpretation of the Markov chain as random walk on the  graph $\cG$, given in Section \ref{pierpi}.
Known $\s_n$ and $T_n$, we can think
that the walker moves at time $T_{n+1}=T_n+ \cT^{(n)}$ to the state
$\s'$ along the $i$--edge if $\cT^{(n)}=   \cT^{(i,n)}_{\s_n, \s'}$. In addition,
recall that in Section \ref{pierpi}, given the initial distribution  $\nu$ on $\cS$, we have defined $\bbP_\nu$ as the probability
measure describing the evolution of the Markov chain keeping knowledge of the crossed edges.
We point out that  $\bbP_\nu$ is a probability measure on the  space  $\O$,  where all the exponential variables
$\cT^{(i,n)}_{\s,\s'}$ are defined.


\subsection{Proof of Theorem \ref{fluttuante}}\label{integ_pierpi_proof}
 The proof follows the
strategy stated in \cite{LS}[Section 2.3], i.e.  one has simply to
extend the proof given in \cite{LS}[Section 2.1] for $m=1$.
The only difference is that one has to keep in mind the special
construction of the Markov chain $X_t$ given above.
 We give the proof  for completeness and also to  show
that the assumptions required in \cite{LS}[Section 2.3] are
redundant.

We fix $\l_1, \dots, \l_m$ and set $g_t (\s)=\bbE_\s \Big[
e^{-\sum _{i=1}^m \l_i
 W^{(i)}(t)}\Big]$. It is well known that the function $[0,\infty)
 \ni t \to g_t (\s) \in (0,\infty)$ is differentiable.
We think  of
 $X_t$ as a random walk  on the graph $\cG$  with multiple edges and no loop as
 explained in Section \ref{pierpi}. Fixed $\e>0$ we introduce the
 following events: $A$ is the event that in the time interval $[0,
 \e]$ the walker does not jump, $B^{(i)}_{\s, \s'} $ is the event
 that in the time interval $[0,
 \e]$ the walker makes a unique jump and this jump takes place  along the $i$--labelled edge
 from $\s$ to $\s'$, while $C$ denotes the event that neither  $A$ nor any
  event $B^{(i)}_{\s, \s'}$   takes place.  By the
Markov property we can write \begin{multline}\label{batteria}
 g_{t+\e}(\s)  = \bbE_\s
\left[ e ^{-\sum _{i=1}^m \l_i W^{(i)}(\e) }g_t (\s_\e) \right]
=\bbE_\s \left[ e ^{-\sum _{i=1}^m \l_i W^{(i)}(\e) }g_t (\s_\e) ;A
\right]  \\
+ \sum _{\s':\s'\not = \s}\sum _{i=1}^m\bbE_\s \left[ e ^{-\sum
_{j=1}^m \l_j W^{(j)}(\e) }g_t (\s_\e); B^{(i)}_{\s, \s' }\right]
+\bbE_\s \left[ e ^{-\sum _{i=1}^m \l_i W^{(i)}(\e) }g_t (\s_\e) ;C
\right] \\ = g_t (\s) \bbP_\s (A)+  \sum _{\s':\s'\not = \s}\sum
_{i=1}^m e^{-\l_i w^{(i)}_{\s, \s'} } \bbP_\s ( B^{(i)}_{\s,
\s'})g_t(\s')+ \bbE_\s \left[ e ^{-\sum _{i=1}^m \l_i W^{(i)}(\e)
}g_t (\s_\e) ;C \right] \,.
\end{multline} Using that $\bbP_\s (A)= 1-r(\s)\e + o(\e)$, $\bbP_\s (B^{(i)}_{\s,
\s'} )=k^{(i)}(\s, \s')\e + o(\e)$ and $\bbP_\s (C)=o(\e)$, where
$$r(\s):=\sum _{\s': \s' \not = \s} k(\s, \s')= \sum  _{\s': \s'
 \not = \s} \sum _{i=1}^m k^{(i)}(\s, \s') \,,$$
 the above expansion  \eqref{batteria} implies that
$$ \frac{d g_t}{dt} (\s)= \sum _{\s': \s' \not = \s} \Big[
\sum _{\substack{i: 1\leq i \leq m \\
k^{(i)}(\s,\s')
>0}} k^{(i)} ( \s,\s')^{1-\l_i} k^{(i)} (\s', \s)^{\l_i}\Big] g_t (\s')-
r(\s) g_t (\s)=: \bigl(\cL_{\underline \l } g_t \bigr) (\s)\,.
$$
 The above differential
equation and the fact that $g_0(\s)=1$ for any $\s$ imply that $g_t
(\s) = \bigl( e^{t \cL _{\underline \l} } g_0\bigr) (\s)=\sum
_{\s'\in \cS} \bigl[e^{t \cL_{\underline \l}}\bigr]_{\s, \s'}$. In
particular, we can write
\begin{equation}\label{ferrodastiro}
\bbE_\nu \left[ e^{-\sum _{i=1}^m \l_i W^{(i)}(t)}\right]= \bigl(
\nu, e^{t \cL _{\underline \l} } \mathbf{1} \bigr)\,,
\end{equation}
where $(\cdot, \cdot )$ denotes the Euclidean scalar product in
$\bbR ^{\cS}$ and $\mathbf{1}$ denotes the vector with all entries
equal to $1$.

We point out that the  $\cL_{\underline \l}$ has off-diagonal
nonnegative  entries.  In particular, for $a$ large enough, $M:=\cL
_{\underline \l} + a \bbI$ has  nonnegative entries and one can
apply to it the Perron--Frobenious theorem (cf. \cite{JQQ}[Theorem
1.5.4]).
 From the fact
that $X_t$ is an irreducible Markov chain, we derive that
$\cL_{\underline \l}$ is irreducible (i.e. given $\s\not= \s'$ there
exists a path $\s=\s_0, \s_1, \dots, \s_n =\s'$ such that
$[\cL_{\underline \l}]_{\s_j,\s_{j+1}}>0$ for all $j:0\leq j <n$)
and as a  consequence the same conclusion holds for $M$. Applying
now the Perron--Frobenius theorem we conclude that $M$ has a maximal
eigenvalue which is simple and is associated to an eigenvector $v$
having all positive entries. Hence,  the same conclusion holds for
$\cL_{\underline \l}$. We call $-e(\underline \l)$ the maximal
eigenvalue of $\cL_{\underline \l}$ associated to $v$. We write
$v_j$ for the $j$--th entry  of $v$ and set  $v_{\rm{max}}= \max _j
v_j$, $v_{\min}= \min _j v_j$. Since both the vectors $v$, $\nu$ and
the matrix  $e^{t \cL _{\underline \l} }= e^{-ta } e^{t M}$ has only
nonnegative entries,  we conclude that \begin{multline}  v_{\max}
^{-1} e^{-t e (\underline \l)} (\nu ,v) = v_{\max} ^{-1} \bigl( \nu
, e^{t\cL_{\underline \l} } v\bigr) \leq \bigl(  \nu , e^{t
\cL_{\underline \l} } \mathbf{1}\bigr)= \bbE _\nu \left[ e ^{-\sum
_{i=1}^m \l_i W^{(i)}(t) } \right]  \\ \leq v_{\min} ^{-1} \bigl(
\nu  , e^{t\cL_{\underline \l} } v\bigr)=v_{\min} ^{-1}e^{-t
e(\underline \l )} (\nu,v)\,.
\end{multline}
Since  $(\nu, v)>0$ the above estimate trivially implies
\eqref{orologio}.

 Since $\cL^*_{(1-\l_1, \dots , 1-\l _m)}= \cL_{(\l_1, \dots, \l_m
 )}$ (the l.h.s. denotes the transposed matrix, recall the definition of $\cL_{\underline \l}$) and since $\cL^*_{(1-\l_1, \dots , 1-\l
 _m)}$ and $\cL_{(1-\l_1, \dots , 1-\l _m)}$ have the same
 eigenvalues, we immediately get \eqref{fr}. \qed

\section{Derivation of Fact \ref{robots} from Theorem
\ref{fluttuante} }\label{chiudo}

Recall that we have labeled the canonical oriented edges in $\cE_c$
as $\ell_1, \ell_2, \dots, \ell_s,$ $ \ell_{s+1}, \dots, \ell_n$,
where $\ell_1, \dots, \ell_s$ are the chords. Due to Definition
\ref{def_Ct}, the representation \eqref{ciclico}  and due to
\eqref{fr_nonna_q} we get
\begin{multline}\label{fr_nonna_bis}
q\bigl( \{\l _\ell: \ell \in \cE_c \} \bigr):= \lim_{t \to \infty}
-\frac{1}{t} \ln \bbE_\nu \left[ e^{-\sum _{\ell \in \cE_c}  \l_\ell
  N_\ell (t)} \right]=\\
  \lim_{t \to \infty}
-\frac{1}{t} \ln \bbE_\nu \left[ e^{-\sum _{\ell \in \cE_c}  \l_\ell
  S_\ell (\cC_t)} \right]\,.\end{multline}

 Due to \eqref{somma_cicli} (recall the definition \eqref{matteo} of the weight $w(\ell)$)  we can write
\begin{multline*} \sum _{i=1}^s \l_i S_{\ell_i}(\cC_t)= \sum _{i=1}^s \l_i S_{\ell_i}(\cC_t)+
\sum _{i=s+1}^n w(\ell_i) S_{\ell_i}(\cC_t)-\sum _{i=s+1}^n
w(\ell_i)  S_{\ell_i}(\cC_t)\\ = \sum _{i=1}^s \l_i
S_{\ell_i}(\cC_t)+
\sum _{i=s+1}^n w(\ell_i)  S_{\ell_i}(\cC_t)- \sum _{i=s+1}^n \sum _{j=1}^s  w(\ell_i)  S_{\ell_j} (\cC_t) S_{\ell_i} (\cC_j)\\
=\sum _{i=1}^s \bigl[ \l_i  - \sum _{a=s+1}^n  w (\ell_a) S_{\ell_a}(\cC_i)\bigr] S_{\ell_i}(\cC_t)+
\sum _{i=s+1}^n w(\ell_i)  S_{\ell_i}(\cC_t)\,.
\end{multline*}
The above identity and \eqref{fr_nonna_bis} imply that the limit \eqref{fr_nonna_AG} exists, is finite and equals
$$ Q(\l_1, \dots, \l_s)= q\Big( \{  \l_i  - \sum _{a=s+1}^n  w (\ell_a) S_{\ell_a}(\cC_i)   \}_{1\leq i \leq s},
 \{w(\ell_i) \} _{s+1\leq i \leq n} \Big)\,.
$$
Due to the fluctuation relation \eqref{voce_q}, the expression in the r.h.s. equals
\begin{equation}\label{cassano}
 q\Big( \{  -\l_i  +w(\ell_i) + \sum _{a=s+1}^n  w (\ell_a) S_{\ell_a}(\cC_i)   \}_{1\leq i \leq s},
 \{0\} _{s+1\leq i \leq n} \Big)\,.
 \end{equation}
As already observed, given $i: 1\leq i \leq s$, $S_{\ell_j}
(\cC_i)=\d_{i,j}$ for $1\leq i,j \leq s$. In particular, $w(\ell_i)
+ \sum _{a=s+1}^n  w (\ell_a) S_{\ell_a}(\cC_i)$ simply equals the
affinity $\cA\bigl(\cC_i\bigr)$. Combining this observation with the
above identities,
 we get that
$$ \eqref{cassano}= q\Big( \{ \cA(\cC_i) -\l_i    \}_{1\leq i \leq s},
 \{0\} _{s+1\leq i \leq n} \Big)= Q( \{ \cA(\cC_i)- \l_i \}_{1\leq i \leq s})\,.
 $$
This completes the proof of Fact \ref{robots}. \qed

\section{Mathematical integration to Subsection \ref{FT_basis} }\label{integ_FT_basis}

We list some simple  properties concerning linear combinations of
oriented cycles:
\begin{lemma}\label{semplice}
 The following holds:
 \begin{itemize}
 \item[(i)] any generating set of oriented cycles has cardinality at least the $\text{chord number }s$,
 \item[(ii)] any independent set of oriented cycles has cardinality at most $s$,
\item[(iii)] any basis has cardinality $s$,
\item[(iv)]   any  generating set of oriented cycles   of cardinality $s$ is a basis,
 \item[(v)]
any independent set of oriented cycles of cardinality $s$ is a basis,
\item[(vi)]
given a basis $\cC_1, \dots, \cC_s$ and given an oriented  cycle $\cC$,  the coefficients $a_1, \dots , a_s$
such that $\cC= \sum_{i=1}^s a_i \cC_i$  are univocally determined,
\item[(vii)]
 any fundamental  set of oriented cycles is a basis.
\end{itemize}
\end{lemma}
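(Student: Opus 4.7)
The plan is to reduce all seven assertions to elementary linear algebra by identifying formal $\bbR$--linear combinations of oriented cycles with vectors in $\bbR^{\cE_c}$. Definition \ref{quadroverde} says precisely that two formal combinations coincide as cycles if and only if they have the same image under the maps $S_\ell$, $\ell\in\cE_c$. Hence the quotient space $V$ of formal $\bbR$--linear combinations modulo this equivalence embeds faithfully into $\bbR^{\cE_c}$ as a linear subspace. Under this identification, ``generating all oriented cycles'' in the paper's sense is equivalent to spanning $V$ (every element of $V$ is by definition a combination of oriented cycles), ``independent'' is equivalent to linear independence in $V$, and ``basis'' in the paper's sense is equivalent to ``basis of the vector space $V$''. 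Once I have $\dim V=s$, parts (i)--(vi) will all be standard facts about finite--dimensional vector spaces.

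The crucial step is therefore (vii), which simultaneously produces an explicit basis of $V$ and pins down its dimension. Fix a spanning tree $T$ with chords $\ell_1,\dots,\ell_s$ and associated fundamental set $\cC'_1,\dots,\cC'_s$. By construction $\cC'_i$ is the unique oriented cycle contained in $T\cup\{\ell_i\}$, oriented in agreement with $\ell_i$, so $S_{\ell_j}(\cC'_i)=\delta_{ij}$ for $1\le i,j\le s$. Independence is then immediate: applying $S_{\ell_j}$ to a vanishing combination $\sum_i a_i\cC'_i=\emptyset$ yields $a_j=0$. For the generating property, given any oriented cycle $\cC$ I will introduce $\cC'':=\cC-\sum_{i=1}^{s}S_{\ell_i}(\cC)\,\cC'_i$; a direct computation using $S_{\ell_j}(\cC'_i)=\delta_{ij}$ shows $S_{\ell_j}(\cC'')=0$ for every chord $\ell_j$, so the image of $\cC''$ in $\bbR^{\cE_c}$ is supported on tree edges only. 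Introducing the boundary map $\partial\colon\bbR^{\cE_c}\to\bbR^{\cS}$ defined by $\partial\ell=\mathbf{1}_{\ell_f}-\mathbf{1}_{\ell_i}$, every oriented cycle lies in $\ker\partial$, hence so does $\cC''$. Because $T$ is a tree, $\partial$ restricted to the subspace of $\bbR^{\cE_c}$ spanned by the tree edges is injective, forcing $\cC''=\emptyset$. Hence $\cC=\sum_i S_{\ell_i}(\cC)\,\cC'_i$, the desired decomposition.

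With (vii) in hand, $V$ has a basis of cardinality $s$ and so $\dim V=s$. The remaining claims then follow from standard linear algebra applied to $V$: (i) any spanning subset has cardinality at least $\dim V=s$; (ii) any linearly independent subset has cardinality at most $s$; (iii) combines (i) and (ii); (iv) and (v) say that in a space of dimension $s$ any family of exactly $s$ vectors which spans (respectively, is independent) is automatically a basis; and (vi) is uniqueness of coordinates with respect to a basis. I expect the main obstacle to be the graph--theoretic input used in (vii)---injectivity of $\partial$ on the tree subspace---but this is the classical statement that a tree supports no non--trivial cycle and follows at once from the definition of a spanning tree.
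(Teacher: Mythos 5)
Your proposal is correct, and it differs from the paper's proof in a structurally interesting way. The paper quotes the decomposition $\cC=\sum_{j=1}^{s}S_{\ell_j}(\cC)\,\cC_j'$ (its equation (3.3), labelled \texttt{somma\_cicli}) as a known fact from Schnakenberg/graph theory and then proves (i) and (ii) by explicit matrix--rank manipulations: it writes every cycle in terms of the fundamental set via a $k\times s$ matrix $B$, shows a left inverse exists when the family generates (hence $k\ge s$), and shows no nontrivial zero row combination exists when the family is independent (hence $k\le s$); (v), (iii), (iv), (vi) follow, and (vii) is finally deduced from (iv) together with the quoted decomposition. You instead \emph{prove} that decomposition (and hence (vii)) from scratch: your $\cC''=\cC-\sum_i S_{\ell_i}(\cC)\cC_i'$ has vanishing chord coordinates, lies in $\ker\partial$ because cycles do, and the classical fact that a tree supports no nontrivial element of $\ker\partial$ forces $\cC''=\emptyset$. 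Having an explicit basis of the cycle space $V\subset\bbR^{\cE_c}$, you read off $\dim V=s$ and obtain (i)--(vi) as standard consequences, rather than by the paper's bespoke matrix arguments. Your route buys self--containedness (no appeal to an unproved decomposition identity) and conceptual economy (one dimension count replaces several rank computations); the paper's route buys brevity by leaning on a standard cited fact. Both are sound.
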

\begin{proof}
We first prove (i), taking a generating set of oriented cycles
$\cC_1, \dots , \cC_k$. Then we
 fix a fundamental set of oriented cycles $ \cC_1', \dots,  \cC_s'$ and call $\ell_1, \dots, \ell_s$ the associated chords. We introduce the $k \times s$
 matrix $B$ with $B_{i,j}= S_{\ell _j} (\cC_i)$. Due to \eqref{somma_cicli} it holds $ \cC_i = \sum _{j=1}^s B_{ij} \cC_j'$ for
  all $i: 1\leq i \leq k$. On the other hand, by definition of generating set, there exists a $s \times k$ matrix $A$ such that $\cC_j'= \sum _{i=1}^k A_{ji}
  \cC_i$ for all $j: 1\leq j \leq s$. Then, for all  $r: 1\leq r \leq s$, it holds
  $$ \d_{j,r} = S_{\ell_r} ( \cC_j')= \sum _{i=1}^k A_{ji} S_{\ell_r} (C_i)= \sum _{i=1}^k A_{ji} B_{ir}\,.$$
The above relations can be rewritten as $\bbI = A B$, where $\bbI$ denotes the $s \times s$ identity matrix. This trivially implies that $B$ must have rank
at least $k$, hence $k\geq s$.

\smallskip

In order to prove (ii) suppose that $\cC_1, \dots, \cC_k$ is an independent set of oriented cycles. Taking a fundamental set
$ \cC_1', \dots,  \cC_s'$  as above,  for a suitable $k \times s $ matrix we can write  $\cC_i = \sum _{j=1}^s B_{ij} \cC_j'$ for
  all $i: 1\leq i \leq k$. If $k >s$, then we could find a non trivial zero linear combination of the rows  of $B$, i.e. not all zero
  coefficients $a_1, \dots, a_k$ such that $ \sum _{i=1}^k a_i B_{ij}=0$ for all $j: 1\leq j \leq s$. Since $
  \sum _{i=1}^k  a_i S_{\ell}( \cC_i) = \sum _{i=1}^k\sum _{j=1}^s a_i B_{i,j} S_\ell (\cC_j')$, this fact
trivially implies that
  $ \sum _{i=1}^k a_i \cC_i=\emptyset$, in contradiction with the  hypothesis of independence. This completes
  the proof of point (ii).
We can directly prove point (v). To this aim suppose that $k=s$.
Then, as argued above, independence implies that the above matrix
$B$ has rank $s$, hence it is invertible. Calling $A:=B^{-1}$,
since   $\cC_i = \sum _{j=1}^s B_{ij} \cC_j'$ , this trivially
implies that $ \sum _{i=1}^s A_{ki } \cC_i= \cC_k'$, for all $k :
1\leq k \leq s$. In particular, the family $\cC_1, \dots, \cC_k$
generates a fundamental set of oriented cycles which generates all
oriented cycles  by \eqref{somma_cicli}.

\smallskip
Point (iii) is an immediate consequence of (i) and (ii). To prove
point (iv), let $\cC_1, \dots \cC_s$ be a generating set of
cardinality $s$. If these cycles were dependent, then we could find
a generating subset of oriented cycles, in contradiction with point
(i). Property (vi) follows from the definition of basis, while
property (vii) follows from (iv) and \eqref{somma_cicli}.
\end{proof}

We point out that not any basis is a fundamental set of oriented cycles.
 Consider figure 5. Vertexes are numbered from $1$ to $6$.
 Consider the oriented cycles $C_1=1 \to 2 \to 5 \to 6 \to 1$, $\cC_2 = 2\to 3 \to 4 \to 5 \to 2 $, $\cC_3 = 3\to 4 \to 6 \to 1\to 3$ and
 $\cC_4 = 4 \to 5 \to 6 \to 4$ (above we have indicated the visited vertexes, in order of visit). This  family is a basis. Indeed, it has cardinality given by the chord number and moreover it trivially generates
 the fundamental set of oriented cycles
 associated to the maximal tree given by the edges $\{1,2\}$,
 $\{1,3\}$,
 $\{1,6\}$, $\{4,6\}$, $\{5,6\}$
 depicted in the figure by boldface (with exception of the edge $\{1,3\}$).
On the other hand $\cC_1, \dots, \cC_4$ do not form a fundamental set of oriented cycles since each  edge of $\cC_4$ belongs (with the same or with
opposite orientation) to some other oriented cycle $\cC_i$, $i\not=4$. Hence, no edge of $\cC_4$ could be the chord associated to a hypothetical maximal tree.

\begin{figure}[!ht]
    \begin{center}
   \psfrag{a}{$1$}
      \psfrag{b}{$2$}
       \psfrag{c}{$3$}
     \psfrag{d}{$4$}
  \psfrag{e}{$5$}
  \psfrag{f}{$6$}
 \includegraphics[width=4cm]{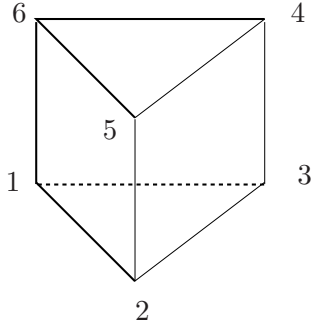}
\caption{Graph for an example of basis which is not a fundamental
set.}
    \end{center}
  \end{figure}

We conclude this section with the proof of Theorem \ref{basis}:

\subsection{Proof of Theorem \ref{basis} }\label{brodo}
We fix a fundamental set of oriented cycles $ \cC_1', \dots,  \cC_s'$ and call $\ell_1, \dots, \ell_s$
the associated chords.
Recall that the orientations of $\cC_i'$ and $\ell_i$ agree.  We introduce the $s \times s$ matrix $B$ defined as
$B_{ij}= S_{\ell_j} ( C_i)$. Then by \eqref{somma_cicli} it holds $\cC_j= \sum _{i=1}^s B_{ji} \cC_i'$.
 We point out that the matrix $B$ is invertible (see the arguments used in the proof of Lemma \ref{semplice}).

Given an oriented  cycle $\cC$ we can express the coefficients $a_i$ in the representation $\cC= \sum _{i=1}^s a_i \cC_i$ by
means of the fundamental set of oriented cycles and associated chords as follows. Applying the operator $S_{\ell_j}$ to the identity
$\cC= \sum _{i=1}^s a_i \cC_i$,
 we get
\begin{equation}
 S_{\ell_j} ( \cC)= \sum _{i=1}^s a_i S_{\ell_j} ( \cC_i)= \sum _{i=1}^s a_i B_{ij}\,, \qquad j =1, \dots, s\,.
 \end{equation}
This implies that
\begin{equation}\label{frontale}
 (a_1, \dots, a_s)= \bigl( S_{\ell_1}(\cC), \dots, S_{\ell_s}(\cC) \bigr) B^{-1}\,.
\end{equation}

Writing $< \cdot, \cdot>$ for the scalar product in $\bbR^s$ and
denoting by $S(t)$ the column vector with entries
$S_{\ell_1}(\cC_t), \dots, S_{\ell_s}(\cC_t)$, the above equation
\eqref{frontale} applied to  the oriented cycle $\cC_t$ gives $ \sum
_{i=1}^s a_i(t)\l_i= <S(t), B^{-1} \l>$. In particular, the
expectation  in \eqref{uffa1} can be written as
$$
\bbE_\nu \Big[ e^{- \sum _{i=1}^s \l_i a_i(s)} \Big] = \bbE _\nu \Big[e^{-<S(t), B^{-1} \l>}\Big]\,.
$$
By the above identity, the representation \eqref{ciclico} and Fact
\ref{robots}, we get that the limit in \eqref{uffa1} exists and it
holds $ Q_b (\l)= Q(B^{-1} \l)$. Applying  the fluctuation relation
\eqref{voce_AG} and afterwards again the identity $Q_b ( \cdot) =
Q(B^{-1} \cdot)$,  we get
$$ Q_b (\l)= Q( B^{-1} \l)= Q(\cA - B^{-1}\l)= Q_b ( B( \cA - B^{-1}\l))= Q_b ( B \cA - \l)\,,$$
where $\cA$ denotes the column vector with entries the affinities $\cA (\cC'_1), \dots, \cA ( \cC_s')$.
To conclude, it is enough to note that
\begin{multline*}
 (B \cA)_i = \sum _{j=1}^s B_{ij} \cA( \cC_j')= \sum _{j=1}^s  S_{\ell_j}(\cC_i) \cA ( \cC_j')=
\sum _{j=1}^s S_{\ell_j}(\cC_i)\Big[ \sum _{\ell \in \cE_c} S_\ell (\cC_j') w(\ell)\Big]=\\
\sum _{\ell \in \cE_c} w(\ell) \Big[ \sum _{j=1}^s S_{\ell_j}(\cC_i) S_\ell (\cC_j')]= \sum _{\ell \in \cE_c} w(\ell)
S_\ell ( C_i) = \cA ( \cC_i)\,.
\end{multline*}\qedhere

\bigskip

\bigskip

\noindent {\bf Acknowledgements}: The authors thank L. Bertini and
D. Gabrielli for useful discussions. A.F. acknowledges  the
financial support of the European Research Council through the
``Advanced Grant'' PTRELSS 228032.


\begin{thebibliography}{99}

\bibitem{AG0} D.\ Andrieux, P. Gaspard, {\em Fluctuation theorems and the nonequilibrium
thermodynamics of molecular motors}. Phys. Rev. E {\bf 74}, 011906 (2006).


\bibitem{AG1} D.\ Andrieux, P.\ Gaspard, {\em Fluctuation theorem for
transport in mesoscopic systems}. J. Stat. Mech.: Theory and
Experiment, P01011 (2006).

\bibitem{AG2} D.\ Andrieux, P.\ Gaspard, {\em Fluctuation theorem for
currents and Schnakenberg network theory}. J. Stat. Phys. {\bf 127},
107--131 (2007).

\bibitem{AG3}  D.\ Andrieux, P.\ Gaspard, {\em A fluctuation theorem for currents and non--linear response
coefficients}.  J. Stat. Mech.: Theory and Experiment, P02006
(2007).



\bibitem{AG4} D.\ Andrieux, P.\ Gaspard, {\em Network and
thermodynamic conditions for a single macroscopic current
fluctuation theorem}. Comptes Rendus Physique {\bf 8}, 579--590
(2007).



\bibitem{BG} L.\ Bertini, G.\ Di Ges\`{u}, {\em Small noise
asymptotic of the Gallavotti--Cohen functional for diffusion
processes}. Preprint (2010).




\bibitem{B} N.\  Biggs, {\em Algebraic graph theory}. Cambridge,  Cambridge University Press, 1974.

\bibitem{BLR} C.\ Bustamante, J.\ Liphardt, F.\  Ritort, {\em
 The nonequilibrium thermodynamics of small systems}. Physics Today {\bf 58},  43--48 (2005).

\bibitem{donder} T.\ De Donder, P.\ Van Rysselberghe, {\em Affinity}. Stanford University Press, Menlo Park CA
 (1936).






\bibitem{GC} G.\ Gallavotti, E.G.D.\ Cohen, {\em  Dynamical ensembles in stationary states}.  J. Stat.
Phys. {\bf 80}, 931--970 (1995).



\bibitem{G}  G.\ Gallavotti, {\em Extension of Onsager's reciprocity to large fields and the chaotic
hypothesis}.  Phys. Rev. Lett.   {\bf 77}, 4334--4337 (1996).




\bibitem{Ga}  P.\ Gaspard, {\em Fluctuation theorems
for nonequilibrium reactions}. Journal of Chemical Physics {\bf
120}, 8898--8905 (2004).







\bibitem{H} T.L.\  Hill, {\em Free Energy Transduction and Biochemical Cycle Kinetics}. Dover Publications Inc., New York,
2004.

\bibitem{Ho} J.\ Howard, {\em Mechanics of motor proteins and the cytoskeleton}. Sunderland, Sinauer Associates Inc., 2001.

\bibitem{JQQ} D.-J.\ Jiang, M.\ Qian, M.-P.\ Qian, {\em Mathematical
theory of nonequilibrium steady states}.  Lecture Notes in
Mathematics {\bf 1833}. Berlin, Springer Verlag, 2004.


\bibitem{JAP} F.\ J\"{u}licher, A.\ Ajdari, J.\ Prost, {\em
Modeling molecular motors}. Rev. Mod. Phys. {\bf 69}, 1269--1281
(1997).


\bibitem{KF} A.B.\ Kolomeisky, M.E.\ Fisher, {\em Molecular motors:
a theorist's perspective}. Annu. Rev. Phys. Chem. {\bf 58}, 675--695
(2007).


\bibitem{KW} A.B.\  Kolomeisky, B.\ Widom, {\em A simplified "ratchet" model
 of molecular motors}. J. Stat. Phys. {\bf 93}, 633--645 (1998).


\bibitem{K}   J.\ Kurchan, {\em Fluctuation theorem for stochastic dynamics}. J. Phys. A.: Math. Gen. {\bf 31}
 3719--3729 (1998).




\bibitem{LM} D.\ Lacoste, K.\ Mallick, {\em Fluctuation relations for
molecular motors}. B. Duplantier and V. Rivasseau (eds),
``Biological Physics. Poincar\'{e} Seminar 2009'', Progress in
Mathematical Physics Vol. {\bf 60}, Birkh\"{a}user Verlag, Basel,
2011.


\bibitem{LLM0}  D.\ Lacoste, A.W.C.\ Lau, K.\ Mallick,  {\em
Non-equilibrium fluctuations and mechanochemical couplings of a
molecular motor}.  Phys. Rev. Lett. {\bf 99}, 158102 (2007).

\bibitem{LLM1} D.\ Lacoste, A.W.C.\ Lau, K.\ Mallick, {\em Fluctuation theorem and
 large deviation function for a solvable model of a molecular
 motor}. Phys. Rev. E {\bf 78}, 011915 (2008).



\bibitem{LM1} D.\ Lacoste,  K.\ Mallick, {\em
Fluctuation Theorem for the flashing ratchet model of molecular
motors}.  Phys. Rev. E {\bf 80}, 021923  (2009)



\bibitem{LS} J.L.\ Lebowitz, H.\ Spohn, {\em A Gallavotti--Cohen-type symmetry in the large deviation functional
for sthocastic dynamics}. J. Stat. Phys.  {\bf 95}, 333--365 (1999).

\bibitem{M}  C.\ Maes, {\em The fluctuation theorem as a Gibbs property}. J. Stat. Phys. {\bf 95}, 367--392 (1999).

\bibitem{N} J.R.\ Norris, {\em Markov chains}. Cambdrige University
Press, Cambridge (1997).

\bibitem{PJAP} A.\ Parmeggiani, F.\ J\"{u}licher, A.\ Ajdari, J.\ Prost, {\em Energy transduction of isothermal ratchets: generic
aspects and specific examples close to and far from equilibrium}.
Phys. Rev. E {\bf 60}, 2127 (1999).

\bibitem{Re} P.\ Reimann, {\em
 Brownian motors: noisy transport far from equilibrium}. Phys. Rep. {\bf 361}, 57--265 (2002).

\bibitem{Ri} F.\ Ritort,
 {\em  Single-molecule experiments in biological physics: methods and
applications}. Journal of Physics C (Condensed Matter) {\bf 18},
R531-R583 (2006).


\bibitem{SCH} J.\ Schnakenberg, {\em Network theory of microscopic and macroscopic behavior of master equation systems}.
Rev.  Mod. Phys. {\bf 48} , 571--585 (1976).


 \bibitem{XWO} J.\ Xing, H.\ Wang, G.\ Oster, {\em From continuum Fokker--Planck
  models to discrete kinetic models}.
Biophysical Journal {\bf 89}, 1551--1563 (2005).


\bibitem{V} S.R.S.\ Varadhan, {\em Large deviations and
applications}.  CBMS-NSF regional conference series in Applied
Mathematics {\bf 46}. Society for Industrial and Applied
Mathematics, Philadelphia  1984.








\end{thebibliography}
\end{document}